  \newtheorem{theorem}{Theorem}
\newtheorem{lem}{Lemma}
\newtheorem{corollary}{Corollary}
\newtheorem{fact}{Fact}
\newtheorem{proposition}{Proposition}
\newtheorem{definition}{Definition}
\newtheorem{remark}{Remark}
\def\old@comma{,}
    \old@comma\discretionary{}{}{}%
\definecolor{darkblue}{rgb}{0.1,0.1,0.8}
\definecolor{DarkGreen}{rgb}{0,0.6,0}
\definecolor{brickred}{rgb}{0.8, 0.25, 0.33}
\definecolor{britishracinggreen}{rgb}{0.0, 0.26, 0.15}
\definecolor{calpolypomonagreen}{rgb}{0.12, 0.3, 0.17}
\definecolor{ao(english)}{rgb}{0.0, 0.5, 0.0}
	\definecolor{cadmiumgreen}{rgb}{0.0, 0.42, 0.24}
\definecolor{burgundy}{rgb}{0.5, 0.0, 0.13}
\newcommand{\addv}[3]{%
	\iftoggle{Track}{%
    	\IfEqCase{#1}{%
       	 	{a}{\ifthenelse{\equal{#2}{ON}}{{\color{cadmiumgreen}#3}}{#3}}%
        	{b}{\ifthenelse{\equal{#2}{ON}}{{\color{brickred}#3}}{#3}}%
       		{c}{\ifthenelse{\equal{#2}{ON}}{{\color{burgundy}#3}}{#3}}%
        	% you can add more cases here as desired
    	}[\PackageError{tree}{Undefined option to tree: #1}{}]%
	}{#3}%
}
\newcounter{relctr} %% <- counter for relations
\everydisplay\expandafter{\the\everydisplay\setcounter{relctr}{0}} %% <- reset every eq
\newcommand{\etal}{\textit{et al. }}
\global\long\def\RR{\mathbb{R}}
\global\long\def\NN{\mathbb{N}}
\global\long\def\EE{\mathbb{E}}
\global\long\def\PP{\mathbb{P}}
\global\long\def\11{\mathbbm{1}}
\newcommand{\bfs}{\mathbf{s}}
\newcommand{\bfx}{\mathbf{x}}
\newcommand{\bfX}{\mathbf{X}}
\newcommand{\clC}{\mathcal{C}}
\global\long\def\+{\oplus}
\newcommand{\prob}[1]{\PP\Big\{  #1 \Big\} }
\def\<{\langle}
\def\>{\rangle}
  \renewcommand{\var}{\mathsf{var}}
  \newcommand{\var}{\mathsf{var}}
 \newcommand{\abs}[1]{\lvert#1\rvert}
 \newcommand{\norm}[1]{\lVert#1\rVert}
  \renewcommand{\set}[1]{\left\{#1\right\}}
  \newcommand{\set}[1]{\left\{#1\right\}}
\DeclareMathOperator*{\tensor}{\otimes}
\providecommand{\tr}{tr}
  \renewcommand{\Tr}[1]{\tr \Big\{#1\Big\}}
  \newcommand{\Tr}[1]{\tr \Big\{#1\Big\}}
\def\MLoss{\mathcal{L}_M}
\def\Loss{L_{D}}
\def\Cconvex{\bar{\mathcal{C}}}
\def\Cextreme{\mathcal{C}^*}
\def\dimH{\text{dim}({\mathcal{H}})}
\def\qps{\sigma^\bfs}
\newcommand{\opt}{\optfont{opt}}
\newcommand{\optfont}[1]{\mathsf{#1}}
\newacro{ptp}[PtP]{Point-to-Point}
\newacro{iid}[i.i.d.]{independent and identically distributed} 
\newacro{IID}[i.i.d.]{independent and identically distributed} 
\newacro{PAC}[PAC]{\textit{probably approximately correct}}
\newacro{VC}[VC]{Vapnik–Chervonenkis}
\newacro{ERM}[ERM]{\textit{empirical risk minimization}}
\newacro{SVM}[SVM]{support-vector machine}
\newacro{POVM}[POVM]{positive operator-valued measure}
\newacro{QPAC}[QPAC]{\textit{quantum probably approximately correct}}
\newacro{QSRM}[QSRM]{\textit{quantum shadow risk minimization}}
\newacro{QC}[QC]{quantum computer}
\newacro{ML}[ML]{machine learning}
\newacro{QML}[QML]{quantum-enhanced machine learning}
\newacro{NISQ}[NISQ]{noisy intermediate-scale quantum}
\newacro{VQA}[VQA]{variational quantum algorithms}
\newacro{QNN}[QNN]{quantum neural network}
\def\BibTeX{{\rm B\kern-.05em{\sc i\kern-.025em b}\kern-.08em T\kern-.1667em\lower.7ex\hbox{E}\kern-.125emX}}
\newcommand{\addvR}[1]{\addv{a}{off}{#1}}
\def\citep{\cite}
\def\citet{\cite}
\begin{document}
\title{New Bounds on Quantum Sample Complexity of Measurement Classes}
%\author{\centering \today}

\author{
\IEEEauthorblockN{Mohsen Heidari}
\IEEEauthorblockA{Department of Computer Science\\Indiana University\\
Email: mheidar@iu.edu}
\and
\IEEEauthorblockN{Wojciech Szpankowski}
\IEEEauthorblockA{Department Computer Science\\
Purdue University\\
Email: szpan@purdue.edu}
    }
\maketitle
\begin{abstract}
This paper studies quantum supervised learning for classical inference from quantum states. In this model, a learner has access to a set of labeled quantum samples as the training set. The objective is to find a quantum measurement that predicts the label of the unseen samples. The hardness of learning is measured via sample complexity under a quantum counterpart of the well-known \ac{PAC}. Quantum sample complexity is expected to be higher than classical one, because of the measurement incompatibility and state collapse. Recent efforts showed that the sample complexity of learning a finite quantum concept class $\mathcal{C}$ scales as $O(\abs{\mathcal{C}})$.  This is significantly higher than the classical sample complexity that grows logarithmically with the class size.      This work improves the sample complexity bound to   $O(V_{\mathcal{C}^*}\log |\mathcal{C}^*|)$, where $\mathcal{C}^*$ is the set of extreme points of the convex closure of $\mathcal{C}$ and $V_{\mathcal{C}^*}$ is the \textit{shadow-norm} of this set.  We show the tightness of our bound for the class of bounded Hilbert–Schmidt norm, scaling as $O(\log |\mathcal{C}^*|)$.  Our approach is based on a new quantum \ac{ERM} algorithm equipped with a shadow tomography method. 
\end{abstract}

\section{Introduction}
Quantum learning is one of the leading applications of quantum computing both for classical and quantum problems. While some models suggest quantum-enhancements of classical learning by mapping data into input quantum states \cite{Giovannetti2008,Park2019,Lloyd2013,Lloyd2014}, \acp{QC} have a far greater capability to learn patterns from inherently quantum data. This is possible by directly operating on quantum states of physical systems (e.g., photons or states of matter) or their qubit representations \cite{Carrasquilla2017,Carleo2017,Broughton2020,massoli2021leap,Lu2018,Kassal2011,McArdle2020,Cao2019}. Learning from quantum data has been studied extensively in recent literature in the context of diverse applications, including quantum simulation \cite{Kassal2011,McArdle2020,Hempel2018,Cao2019,Bauer2020}, phase-of-matter detection \cite{Carrasquilla2017,Broecker2017}, ground-state search \cite{Carleo2017,Broughton2020,Biamonte2017}, and entanglement detection \cite{Ma2018,massoli2021leap,Lu2018,Hiesmayr2021,Chen2021a,Deng2017}. In this context, \acp{VQA} and \acp{QNN} are among the most conventional models of a quantum learner that consist of a parameterized circuit trained in a hybrid quantum-classical loop. 

One of the objectives of quantum learning is to train a model with a minimum number of iterations or samples. This is crucial in scenarios where quantum data is scarce or is expensive to produce. For instance, sometimes in the quantum simulation of ground state search, the quantum states are the output of a Hamiltonian with high gate complexity, making the state preparation expensive.  An explicit example is the Hamiltonian of the two-dimensional Fermi-Hubbard model on an $8 \times 8$ lattice that requires about $10^7$ Toffoli gates \cite{Kivlichan2020}. The challenge is exacerbated as the training models are data-intensive due to the \textit{no-cloning} and \textit{state collapse} that prohibit sample reuse --- a phenomenon unique to quantum. Therefore, studying the quantum sample complexity, as a measure of the hardness of training a model, is foundational to our understanding of quantum learning. This is the focus of our study. %an imperative step in understanding the foundations of quantum learning.    

%With this motivation, we study bounds on the quantum sample complexity as a measure of the hardness of training a model.
In classical learning theory,  sample complexity has been studied extensively for decades under the well-known \ac{PAC} framework \citep{Kearns1994,Valiant1984}. The theory of quantum learning has been studied recently under various models \cite{Aaronson2007,Barnett2009,Bshouty1998,Badescu2019,Montanaro2016,HeidariQuantum2021,Heidari2023Quantum1}.
A relevant example is \textit{quantum state discrimination}, where   the objective is to distinguish an unknown quantum state $\rho$ from another (known or unknown) state using \textit{measurements} on multiple samples \cite{ODonnell2016,Haah2016,Badescu2019,Bubeck2020,Barnett2009,Gambs2008,Guta2010,Cheng2021}. In the Bshouty and Jackson \cite{Bshouty1998}, a classical PAC problem is studied using a quantum oracle that outputs identical copies of an associated superposition state \cite{Bshouty1998,Arunachalam2017,Kanade2018,Bernstein1997,Servedio2004}.

In this work, we consider \ac{QPAC} which is a natural generalization of PAC in quantum \cite{HeidariQuantum2021}. QPAC applies to conventional quantum supervised learning with VQAs, where one aims to accurately predict a
classical property (a.k.a., label) of an unknown quantum
state.
QPAC is a generic framework that subsumes several models such as state discrimination, quantum property testing,  quantum state classification, and classical PAC. A learner has access to $n$ physical quantum states with their classical label as the supervised training set. The learner's objective is to find a quantum measurement that minimizes the loss for predicting the label of unseen samples. % The samples are \ac{iid}, according to an unknown  but fixed probability distribution. 
Here, the learner does not know the sample distribution, the classical description of the states, and  the labeling law. It can only measure the $n$ physical samples. With the \textit{no-free lunch} theorem, the learner can only hope to  find a predictor with a loss sufficiently close to the best candidate in  a predefined set called the \textit{concept class}. \acp{QNN} or variational circuits are examples of concept classes. 
% and has no structural knowledge about the states $\rho_i$, the labeling labeling law $y_i$, and the underlying distribution. 
%The learner's performance is compared  The objective is to obtain a loss that is close to the optimal loss within a library of predictors (a.k.a concept class). 
The quantum sample complexity of a given concept class is the minimum sample size $n$ such that a learner successfully finds a predictor performing close to the best candidate of the class. %for which exists a learner to achieve the above task. %   then, the minimum number of samples to guarantee such requirement for all quantum states, labeling, and the underlying distributions. 

%Here the data is either sourced from the physical state of the system such as photons or spin electrons, or could be qubits representing the states. 
%Learning from quantum data helps to understand chemical interactions as well as design new models. 
%A fundamental question in quantum learning is to determine the quantum sample complexity of a model; that is the minimum number of training samples to guarantee that the prediction loss is close to the model's optimal value. To study this problem, we consider a comprehensive model for quantum learning that subsumes several well-known models such as classical \ac{PAC} by \citep{Kearns1994,Valiant1984}, state discrimination, quantum property testing, and quantum state classification (see Section \ref{subsec:related} for related works).  This model consists of a set of $n$ labeled quantum states $(\rho_i, y_i)_{i=1}^n$ as the training samples. The samples are being generated \ac{iid} according to an unknown  but fixed probability distribution. There is no structural assumption about the samples, the states $\rho_i$ and the underlying distribution are unknown.  In \ac{QPAC}, We seek a procedure that tunes a class of quantum circuits/operations (a.k.a concept class) to minimize the loss  in predicting the label of the quantum states. Quantum sample complexity is, then, the minimum number of required samples to guarantee that prediction loss is close to that of the optimal predictor in the class. 

QPAC  is therefore a distribution-free and state-free requirement. It is stronger than PAC, as PAC is only distribution-free.  QPAC abides by quantum mechanical laws such as no-cloning, state collapse, and measurement incompatibility. Such properties prohibit sample reuse and, thus,  raise new challenges for learning in quantum settings. Moreover, quantum models (i.e., \acp{QNN}) are significantly  richer than classical models. Therefore, given the fragility of quantum samples, the strictness of QPAC, and the richness of quantum models, one expects  quantum sample complexity to be significantly, if not exponentially, greater than the classical one. % How difficult it is to train a quantum model? In this paper, we study the Fundamental limits of learning from quantum data. 

%There are challenges due to the quantum nature of the problem. no-cloning of quantum samples. Measurement incompatibility. That raises the question as to how well one can train a quantum model. More importantly, we know that quantum models are richer than classical ones. Given the fragility of quantum samples and the richness of quantum models, how difficult it is to train a quantum model? In this paper, we study the Fundamental limits of learning from quantum data. 

%Classical counter part of this question is answered in classical learning theory. PAC is the model. Classical sample complexity is developed. %Extensions of these concepts to quantum learning is not straightforward at all. For instance, even a simple algorithm such as QERM is unclear. The bounds could be very large. 

%Treating these problems in a classical framework has drawbacks, since measurement on the quantum state to map a problem into classical framework introduces sampling error and state collapse. Other solutions that  rely on \text{quantum state tomography} are  expensive in terms circuit operations \citep{ODonnell2016,ODonnell2017,Haah2016}. 
Recent results show that any finite concept class $\mathcal{C}$ is learnable with $\order{|\mathcal{C}|}$ quantum samples  \cite{HeidariQuantum2021} leaving an exponential gap compared to classical sample complexity that scales with $\order{\log |\mathcal{C}|}$.  This gap is mainly due to the no-cloning and state collapse of quantum. However, a better bound is remained to be shown. Infinite classes have been studied recently in \cite{Padakandla2022,Heidari2023Quantum1}. In \cite{Padakandla2022} the results of  \cite{HeidariQuantum2019}  was extended to infinite concept classes through an $\epsilon$-netting argument. The authors in \cite{Heidari2023Quantum1} studied learning of $k$-qubit (or $k$-local) operators i.e., operations acting non-trivially on at most $k$ out of $d$ qubits.  They showed that the sample complexity scales as $\order{k~4^k \log d}$. When $k\ll d$, this bound is comparable to $\order{k\log d}$ which is the sample complexity of classical $k$-juntas.

\subsection{Main Contributions}
We study learning of generic finite or infinite quantum concept classes. Our main contributions are two-fold. 

\noindent\textbf{Bounds on quantum sample complexity.} This paper presents a new improved bound on the quantum sample complexity of generic concept classes (finite or infinite). Particularly, we prove that the quantum sample complexity of any measurement concept class $\mathcal{C}$ grows with $\order{V_\Cextreme\log |\mathcal{C}^*|}$, where $\mathcal{C}^*$ is the set of extreme points of the convex closure of the concept class and $V_\Cextreme$ is the \textit{shadow-norm} of this set. In general, $|\Cextreme|\leq |\mathcal{C}|$. In an extreme case $ |\mathcal{C}|$ could be infinite while $|\Cextreme|$ is finite. The bound is tight and simplifies to $\order{\log |\mathcal{C}|}$ for concept classed with bounded Hilbert–Schmidt norm.  %Moreover, as a consistency check, our bound for $k$-qubit concept classes matches that of \cite{Heidari2023Quantum1}.  
These results are surprising especially since sample duplication is prohibited and measurement incompatibility would lead to an exponentially larger sample complexity with standard methods. Such results are steps toward a quantum analogous to the fundamental theorem of statistical learning, which remains open.  %resembles those via \ac{VC} dimension in classical learning. However, we expect that these results are far from characterizing the quantum sample complexity, as probably one needs to develop a quantum equivalent of \ac{VC} theory. 
%Particularly, we prove that the quantum sample complexity of any measurement concept class $\mathcal{C}$ grows with $\order{\frac{1}{\epsilon^2}\log \frac{|\mathcal{C}^*|}{\delta}}$, where $\mathcal{C}^*$ is the set of extreme points of $\mathcal{C}$. Hence, in the worst case, the bound grows with $O(\log \abs{\mathcal{C}})$ if $\mathcal{C}$ is finite. Interestingly, the bound can be much lower than $\log \abs{\mathcal{C}}$.  

\noindent\textbf{QSRM algorithm.} We propose a new approach called \ac{QSRM} to measure the empirical risk of the predictors in the class  that substantially improves the quantum sample complexity of existing works.  Our approach relies on searching only the \textit{extreme} points of the concept class and making use of shadow tomography \cite{Huang2020}.  
%that we cannot know exactly how well an algorithm will work in practice (the true "risk") because we don't know the true distribution of data that the algorithm will work on, but we can instead measure its performance on a known set of training data (the "empirical" risk). Hence, ERM chooses a predictor that minimizes the empirical risk with the hope that the training samples are a good representation of the true distribution. This is guaranteed as long as $n=O(\log |\clC|)$
In classical learning \ac{ERM} is a brute-force search to minimize the empirical loss and has a sample complexity of $\order{\log |\clC|}$. Extending this algorithm to quantum is not straightforward. One could naively propose the same technique to compute the empirical risk of each quantum predictor and choose the one with the minimum risk. However, the no-cloning and measurement incompatibility makes this approach prohibitive. Essentially, the training samples will be distorted each time we measure the empirical risk of a predictor. If we use that naively, then we might need fresh samples for each predictor, leading to a sample complicity of $\order{|\clC|}$. There have been multiple attempts \cite{HeidariQuantum2021,Padakandla2022} to introduce an efficient quantum ERM, obtaining a better bound that grows with $O(\log |\clC|)$ when measurements in $\clC$ are fully \textit{compatible} and  $O(|\clC|)$ in fully \textit{incompatible} class. Our algorithm achieves a sample complexity $O(V_\Cextreme\log |\mathcal{C}^*|)$ that improves upon these works.

%to be shown. %We derive these bounds by proposing a quantum algorithm that tackles the issue of measurement incompatibility. 

\vspace{-0.15cm}
\section{Model Formulation}

\noindent\textbf{Notations:}  For shorthand, denote $[d]$ as $\set{1,2,...,d}$. %Also, for any  $\bfs \in \{0,1,2,3\}^d$, define $\supp(\bfs)\deq \set{\ell\in[d]: s_\ell\neq 0}$. 
%For any $d\in \NN$, let $H_d$ be the Hilbert space of $d$-qubits with $\dim=2^d$. %An operator (linear) acting on $H_d$ is a mapping $A: H_d \mapsto H_d$ which is linear.
%By $\mathcal{B}(H_d)$ denote the space of all bounded (linear) operators acting on $H_d$.
%The identity operator on $H_d$ is denoted by $\id$. %An operator $\Pi$ is a projection if $\Pi^2=\Pi$. 
%As usual, a quantum state is defined as a \textit{density operator}; that is a Hermitian, unit-trace, and non-negative linear operator.  
%More precisely, 
%\begin{align*}
% \rho\in \mathcal{B}(H_X), \qquad \rho^\dagger = \rho, \qquad \rho \geq 0, \qquad \tr{\rho}=1.
% \end{align*}
As usual, a quantum measurement $\mathcal{M}$ is  a \ac{POVM} represented by a set of operators $\mathcal{M}:=\{M_v, v\in\mathcal{V}\}$, where $\mathcal{V}$ is the set of possible outcomes, $M_v\geq 0$ for any $v\in \mathcal{V}$, and $\sum_{v\in \mathcal{V}} M_v =I.$ Throughout of this paper, we assume a finite-dimensional Hilbert space $\mathcal{H}$ with dimensionality denoted by $\dimH$.% For an operator $A$, denote $\norm{A}_1=\tr{|A|}$ as the trace norm, and $\norm{A}_2=\sqrt{\tr{A^\dagger A}}$ as the Hilbert–Schmidt norm. %The measurement $\mathcal{M}$ is \textit{sharp/projective} if the operators $M_v$ are projections.
 %
%The operator norm (infinity norm) for an operator $A$ is defined as $$\norm{A}=\norm{A}_\infty = \sup_{\Phi\in H: \norm{\Phi}=1}\norm{A\Phi}$$

%\subsection{Quantum Boolean Operators}

\subsection{Quantum Learning Model}
%Before presenting the main results, we formally define our quantum learning model. %We consider a model that generalizes several well-studied problems such as state discrimination problem and quantum property testing.
Our model follows that of \citep{HeidariQuantum2021}. The objective is to  distinguish between multiple groups of unknown quantum states without prior knowledge about the states. Available is only a training set of quantum states with a classical label determining their group index. We seek a  procedure that  learns the labeling law. The model is defined more precisely as follows. 

Let $\mathcal{Y}$ denote the labeling set and $\mathcal{H}$ be the underlying finite-dimensional Hilbert space. There are $n$ sample $( \ket{\phi}_i, y_i), i\in [n]$ randomly generated according to an unknown but fixed probability distribution $D$\footnote{Alternatively, one can view the samples as copies of the mixed state $\rho_{XY}:=\EE_D[\ketbra{\phi}\tensor\ketbra{Y}]$.}.  A predictor is a quantum measurement $\mathcal{M}:=\set{M_{\hat{y}}:  \hat{y}\in \mathcal{Y}}$ that acts on the quantum states and outputs $\hat{y}\in \mathcal{Y}$ as the predicted label. %Note that, unlike classical learning, the predicted label is random even for a fixed input.
From Born's rule, $\hat{y}_i$ is generated randomly with probability $\matrixelement{\phi_i}{M_{\hat{y}_i}}{\phi_i}.$   The prediction loss is  determined via a loss function $l: \mathcal{Y}\times \mathcal{Y}\rightarrow [0, \infty)$. %The risk of a predictor $\mathcal{M}$ is computed by randomly generating a test sample $(\ket{\phi}_{{test}}, y_{test})$ according to $D$ and measuring $\ket{\phi}_{{test}}$ with $\mathcal{M}$ to get $\hat{y}_{test}$. %Then, % loss function to measure the prediction error, that is $\11\set{y_{test}\neq \hat{y}_{test}}$.   %Hence, the joint distribution of $(Y,\hat{Y})$ is given by $$\PP\{Y=y, \hat{Y}=\hat{y}\}=p_y\tr{M_{\hat{y}}{\rho}_{y}}.$$
   Therefore, the generalization (expected) loss is calculated as  $\Loss(\mathcal{M}) =\EE[l(Y,\hat{Y})],$ where the expectation is taken over all the randomness involved. 
   
   The learner's objective is to minimize $\Loss$ while having no prior knowledge about $D$ and the structure of the states. It  can only measure the $n$ physical samples without  knowing their classical descriptions. Due to the \textit{no-free lunch} theorem, the learner can only hope to find a $\mathcal{M}$ with $\Loss(\mathcal{M})$  sufficiently close to the best candidate within a fixed collection  $\mathcal{C}$ of predictors called the \textit{concept class}. %With this setup, a quantum learning algorithm processes the training samples and finds a predictor $\mathcal{M}$ which may or may not belong to $\mathcal{C}$.

\begin{definition}[QPAC] \label{def:quantum PAC}
A quantum learning algorithm \textit{agnostically} QPAC learns a concept class $\mathcal{C}$  if there exists a function $n_{\mathcal{C}}: (0,1)^2\mapsto \NN$ such that for every $\epsilon, \delta \in [0,1]$ and given $n>n_{\mathcal{C}}(\epsilon,\delta)$  samples drawn \ac{iid} according to any probability distributions $D$ and any unknown states $\ket{\phi}_i, i\in [n]$, the algorithm outputs, with probability of at least $(1-\delta)$, a measurement  whose loss is less than $\inf_{\mathcal{M}\in \mathcal{C}} \Loss(\mathcal{M})+\epsilon$.\footnote{Naturally, we are interested in efficient learning where   $n_{\mathcal{C}}$ is polynomial in $\epsilon, \delta$ and $\dimH$.} The quantum sample complexity of  $\mathcal{C}$ is the minimum of $n_{\mathcal{C}}$ for which there exists a QPAC learner.
\end{definition}
%For the task of binary classification which is the focus of this work, the predictors are two-valued POVMs denoted by $\mathcal{M}:=\{M_0, M_1\}$ with outcome in $\{0,1\}$. For each measurement $\mathcal{M}$, we define $\gm:=M_1-M_0$ as its operator representation. With $\gm$ one can construct $\mathcal{M}$. This is because $M_0=\frac{1}{2}(I-\gm)$ and $M_1=\frac{1}{2}(I+\gm)$. Hence, there is a one-to-one correspondence between any valid $\gm$ and two-valued measurement $\mathcal{M}$.
 %\footnote{This formulation is different from that of \citep{Bshouty1998} in which quantum samples are used to learn a classical concept class as opposed to measurement classes. %In that model, the quantum and classical sample complexity are in fact equal up to constant factors \citep{Arunachalam2018}.}.
 % At first glance, with quantum samples being more information dense compared to classical ones, one might expect that quantum sample complexity of a class is lower than its classical counterpart. On the other hand, 
%This model subsumes well-studied problems such as \textit{state discrimination}, \textit{state certification} \citep{Badescu2019}, and quantum property testing. 

The state discrimination problem is a special case in which samples are identical and are either of two \textit{a priori} known states.  QPAC also subsumes classical PAC as classical samples and concept classes are mutually diagonalized in the canonical basis. Generally, QPAC is a stronger requirement than PAC and other methods, as it is a \textit{agnostic}, distribution-free and state-free condition. Hence, one expects that quantum sample complexity is significantly higher than classical one.   %In addition, principles such as the no-cloning and state collapse after measurements, indicate that quantum samples are more fragile than classical ones.  
% or not, and how large the gap between the classical and quantum sample complexity is. 
%Quantum sample complexity is different from that of \citep{Jackson1997}. 
%In this paper, we prove the contrary for the class of shallow-width quantum circuits. 

\vspace{-0.1cm}
\subsection{Related Works on QERM}
It is known that classical \ac{ERM} PAC learns any (classical) finite concept class $\mathcal{C}$ with sample complexity that scales with $\order{\frac{1}{\epsilon^2} \log \frac{|\mathcal{C}|}{\delta}}$. In quantum settings, there have been various attempts in developing counterparts of \ac{ERM} algorithm. Due to the no-cloning theorem, the straightforward quantum extension of \ac{ERM} results in a sample complexity of $\order{\frac{|\mathcal{C}|}{\epsilon^2} \log \frac{1}{\delta}}$, see \cite{HeidariQuantum2021} for more details. This is problematic as the sample complexity grows linearly with the size of the concept class.  

In \cite{HeidariQuantum2021}, a new ERM-type algorithm is introduced to improve this bound.  The new bound depends on the measurement \textit{incompatibility} structure of the concept class. Incompatible measurements cannot be measured simultaneously (for more details see \cite{Holevo2012}). %To present our results, we need to elaborate on the notion of \textit{compatibility}. %For that we provide the following definitions.
%\addvd{The predictors in this paper are assumed to be \textit{sharp} measurements. Thus, from Theorem 2.13 of \cite{Holevo2012} the definition of compatibility is reduced to the following. }
%\begin{definition} \label{def:compatible}
% A collection of sharp measurements $\mathcal{M}^j=\set{M^j_y: y\in \mathcal{Y}}, j =1,2, ..., k$, are \textit{compatible} if their operators mutually commute, that is $M^j_y M^\ell_{\tilde{y}}= M^\ell_{\tilde{y}}M^j_y$ for all $j,\ell\in [k]$ and all $y,\tilde{y}\in \mathcal{Y}$.
% \end{definition}
On one extreme, all the measurements in the concept class are mutually compatible; on another extreme, there is no pair of compatible measurements. Based on this, an improved bound on sample complexity is as follows.
%Consequently, if $\mathcal{C}$ is a compatible concept class, then there exists a basis on which all the predictors are diagonalized. If $\mathcal{C}$ is a general concept class. Then, we group its members into  compatible subclasses. 
% \begin{definition} \label{def:compatible cover}
% Given  a collection of observables $\mathcal{C}$, a compatibility partitioning is a family of distinct subsets $\mathcal{C}_1, \mathcal{C}_2, ..., \mathcal{C}_m$ of $\mathcal{C}$ such that $\mathcal{C}=\medcup_r \mathcal{C}_r$ and that the observables inside each $\mathcal{C}_r$ are compatible internally with each other. 
% \end{definition} 
%  Note that there always exists a compatibility partitioning as the single element subsets of $\mathcal{C}$ form a valid covering. Further, note that the compatibility structure is an inherent property of the concept class which is independent
% of the samples. 
%Now with the above definitions, we are ready to present our main result in the following theorem. 

\begin{fact}[\citep{HeidariQuantum2021}]\label{rem:QERM ISIT}
Quantum sample complexity of any finite concept class  $\mathcal{C}$ is upper bounded as
\begin{align*}
n_{\mathcal{C}}(\epsilon, \delta) \leq  \min_{\mathcal{C}_r \text{Comp. partition}} \sum_{r=1}^b \Big\lceil{\frac{8}{\epsilon^2}\log\frac{2b|\mathcal{C}_r|}{\delta}}\Big\rceil,
\end{align*}
where the minimization is taken over all compatibility partitioning of $\mathcal{C}$ \addvR{and $b$ is the number of partition bins}. This bound ranges from $\order{\frac{1}{\epsilon^2} \log \frac{|\mathcal{C}|}{\delta}}$, for fully compatible class, to  $\order{\frac{|\mathcal{C}|}{\epsilon^2} \log \frac{1}{\delta}}$ for fully incompatible class.
\end{fact}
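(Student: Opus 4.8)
The plan is to reduce the problem, bin by bin, to classical ERM, exploiting that a \emph{jointly measurable} (compatible) family of POVMs can be realized by a single ``mother'' POVM followed by classical post-processing. Fix any compatibility partition $\mathcal{C}=\bigcup_{r=1}^{b}\mathcal{C}_r$, where each $\mathcal{C}_r$ is jointly measurable \emph{as a set}. By joint measurability there is a POVM $\mathcal{G}_r=\{G_g\}$ and, for each $\mathcal{M}=\{M_{\hat y}\}\in\mathcal{C}_r$, a stochastic kernel $q^{\mathcal{M}}(\hat y\mid g)$ with $M_{\hat y}=\sum_g q^{\mathcal{M}}(\hat y\mid g)\,G_g$. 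Hence one measurement of $\mathcal{G}_r$ on a sample, followed by classical relabelling, reproduces the output statistics of \emph{every} $\mathcal{M}\in\mathcal{C}_r$ simultaneously.

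Concretely, I would proceed as follows. First, allocate $n_r=\lceil \tfrac{8}{\epsilon^2}\log\tfrac{2b|\mathcal{C}_r|}{\delta}\rceil$ fresh training pairs to bin $r$, measure $\mathcal{G}_r$ on the state $\ket{\phi}_i$ of each pair $(\ket{\phi}_i,y_i)$ to obtain an outcome $g_i$, and form, for every $\mathcal{M}\in\mathcal{C}_r$, the per-sample estimator $\widehat\ell_i(\mathcal{M})=\sum_{\hat y} q^{\mathcal{M}}(\hat y\mid g_i)\,l(y_i,\hat y)$. Born's rule gives $\EE[\widehat\ell_i(\mathcal{M})\mid \ket{\phi}_i,y_i]=\sum_{\hat y}\matrixelement{\phi_i}{M_{\hat y}}{\phi_i}\,l(y_i,\hat y)$, so after averaging over $D$ the sample mean $\widehat L_r(\mathcal{M})=\tfrac1{n_r}\sum_i\widehat\ell_i(\mathcal{M})$ is an unbiased, $[0,1]$-valued (assuming $l\in[0,1]$) estimate of $\Loss(\mathcal{M})$. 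Next, Hoeffding together with a union bound over the $|\mathcal{C}_r|$ predictors in each bin and over the $b$ bins shows that, with probability at least $1-\delta$, $|\widehat L_r(\mathcal{M})-\Loss(\mathcal{M})|\le \epsilon/2$ holds for all $r$ and all $\mathcal{M}\in\mathcal{C}_r$ at once; the stated $n_r$ is exactly what forces $2|\mathcal{C}_r|e^{-n_r\epsilon^2/2}\le\delta/b$. Finally, output $\widehat{\mathcal{M}}\in\argmin_{\mathcal{M}\in\mathcal{C}}\widehat L(\mathcal{M})$, each predictor scored on the samples of its own bin; on the good event the standard two-sided ERM chain $\Loss(\widehat{\mathcal{M}})\le \widehat L(\widehat{\mathcal{M}})+\tfrac\epsilon2\le \widehat L(\mathcal{M}^\star)+\tfrac\epsilon2\le \Loss(\mathcal{M}^\star)+\epsilon$ closes the argument for any near-optimal $\mathcal{M}^\star\in\mathcal{C}$. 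Since the partition was arbitrary, taking the minimum of $\sum_{r=1}^b n_r$ over all compatibility partitions yields the claimed bound.

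The hard part will be the two quantum-specific subtleties rather than the classical concentration steps. The first is the structural claim that a compatible family is the post-processing of one mother POVM: this is the standard characterization of joint measurability, but it must be invoked carefully because pairwise compatibility is strictly weaker than joint compatibility, so a ``compatibility partition'' has to be \emph{defined} to have jointly-measurable bins in order for the mother POVM $\mathcal{G}_r$ to exist. The second, and the real source of the bound's shape, is that measuring $\mathcal{G}_r$ collapses the sample, so the mother POVMs of different bins --- being mutually incompatible --- cannot share data; this is why the sample counts $n_r$ must be \emph{added} rather than maximized, giving the $\order{\tfrac1{\epsilon^2}\log\tfrac{|\mathcal{C}|}{\delta}}$ best case (a single bin) and the $\order{\tfrac{|\mathcal{C}|}{\epsilon^2}\log\tfrac1\delta}$ worst case (singleton bins). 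A minor bookkeeping point is tracking the constant $8$, which absorbs both the $\epsilon/2$ slack needed for the ERM inequality and the range factor in Hoeffding.
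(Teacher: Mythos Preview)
The paper does not supply its own proof of this statement; it is quoted as a Fact from \cite{HeidariQuantum2021} and used only as background for comparison with the new QSRM bound. So there is no in-paper argument to compare against.

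That said, your approach is correct and is essentially the argument one would expect in the cited reference: reduce each jointly measurable bin to a single mother POVM plus classical post-processing, so that within a bin the problem becomes classical ERM (Hoeffding plus union bound over $|\mathcal{C}_r|$ hypotheses), while across bins fresh samples are unavoidable because the mother POVMs are mutually incompatible, forcing the $n_r$ to add. Your caveat that ``compatibility partition'' must mean \emph{jointly} measurable bins, not merely pairwise compatible ones, is exactly the right structural point and is required for the mother POVM to exist. One small remark: with your Rao--Blackwellized estimator $\widehat\ell_i(\mathcal{M})=\sum_{\hat y}q^{\mathcal{M}}(\hat y\mid g_i)\,l(y_i,\hat y)\in[0,1]$, Hoeffding at deviation $\epsilon/2$ gives $2\exp(-n_r\epsilon^2/2)$, which yields a constant $2$ rather than $8$ in front of $\tfrac{1}{\epsilon^2}$; the looser $8$ in the stated bound presumably reflects a slightly different estimator or bookkeeping in the original reference, not a flaw in your reasoning.
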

In \cite{Padakandla2022}, this result was extended to infinite concept classes through an $\epsilon$-netting argument.  %Again, the new bound can grow with $\order{\frac{|\mathcal{C}|}{\epsilon^2}}$ for fully incompatible concept classes. 
In this paper, we propose a new quantum ERM that substantially improves the above bounds to one that, in the worst case, grows with $\order{\frac{1}{\epsilon^2} \log \frac{|\mathcal{C}|}{\delta}}$ even for fully incompatible concept classes. %The bound is also valid for infinite concept classes. %Moreover, we show that this bound is tight up to a scaling factor. 

\section{Main Results}
In this section, we present the main results of the paper on the sample complexity. We prove the upper bound by proposing a quantum shadow risk minimization (QSRM). We first discuss methods to measure the empirical loss. 
%We first, start with studying the convex closure of the concept class. 

\subsection{Measuring the Empirical Loss}
%\subsection{Measuring the Empirical Loss of One Predictor}
%We start with defining the loss measurement for each predictor $\mathcal{M}$.  
Without loss of generality, assume $l: \mathcal{Y}\times \mathcal{Y} \mapsto [0,1]$.
 Let $\mathcal{Z}$ be the image set of $l$, that is $\mathcal{Z}:= \set{l(y,\hat{y}): y, \hat{y}\in \mathcal{Y} }$. Since $\mathcal{Y}$ is a finite set, then so is $\mathcal{Z}$. We define a loss observable for any predictor $\mathcal{M}$ %:=\set{M_{\hat{y}}: \hat{y}\in \mathcal{Y}}$ 
 as
$\MLoss:=\set{L^M_z: z\in\mathcal{Z}}$, where  
\begin{align}\label{eq:Loss operators}
 L^M_z = \sum_{\substack{y, \hat{y}\in \mathcal{Y}}} \11_{\set{l(y, \hat{y})=z}} M_{\hat{y}}\tensor \ketbra{y}, \qquad  \forall z\in\mathcal{Z}.
 \end{align} 
With that, the expectation value of $\MLoss$ is equal to $\Loss(\mathcal{M})$. Moreover, applying $\MLoss$ on any sample gives the incurred empirical loss of $\mathcal{M}$ on that sample. % $\mathcal{M}$ for predicting $y$ from a given $\rho_x$ is obtained by applying  $\MLoss$ on $\rho_x\tensor \ketbra{y}$. The result is a random variable $Z=\ell(y, \hat{Y})$ taking values from $\mathcal{Z}$ as in \eqref{eq:Loss operators}. With this formulation the expected loss of $\mathcal{M}$ equals to 

Now given a set of samples $\mathcal{S}_n = \set{ (\ket{\phi_i}, y_i): i\in [n]}$, we would like to estimate the empirical loss for all predictors in $\clC$. In the classical setting, this can be done easily by testing each predictor on the training samples, leading to a $\order{\log |\clC|}$ bound on $n$.  However, in quantum, the samples can be used once as they collapse after the measurements. %This however is impossible due to the no-cloning and measurement incompatibility. 
Moreover,  the loss measurements $\MLoss$ might be \textit{incompatible} for different $\mathcal{M}\in \mathcal{C}$ --- rendering simultaneous measuring obsolete. As a result, a naive ERM in quantum would need fresh copies for each predictor, needing $\order{|\clC|}$ samples.  %In \cite{HeidariQuantum2021}, a new ERM was proposed based on  partitioning $\mathcal{C}$ into mutually compatible subsets. %With that approach bounds on the  sample complexity were introduced. 
%The new bounds slightly improved.  Unfortunately, in the worst-case scenario, the bounds could grow with $|\mathcal{C}|$.  

In what follows, we introduce a new approach for measuring the empirical loss.
%\red{Say the difference between \citep{Huang2020} and our approach. Say we are the first who applies such technique in learning context. More over they assume exact copy of the states, we assume IID samples. In addition, we give an explicit expression and resolve the procedure for qubits. }
Our approach is inspired by shadow tomography \citep{Huang2020} that is applied to {\it identical copies} of quantum states.  Shadow tomography is a procedure to obtain useful classical information via random measuring of the input state. In what follows we explain this procedure.  %To the best of our knowledge, this is the first work on applying such a technique to a learning context. Moreover, we consider an extension of it to randomly generated states; rather than identical copies.   
%We perform the following procedure for each sample $(\rho_i, y_i), i\in [n]$ and then explain our approach. %In Section \ref{subsec:estimation circuit}, we present a scalable circuit performing the estimation.  
 
\addvR{Consider a generic state $\rho$ in a Hilbert space $\mathcal{H}$.  Generate a  unitary operator $U$ randomly from a class of choices $\mathcal{U}$ to be determined.  %space of all unitary operators on the underlying Hilbert space $\mathcal{H}$. 
Apply $U$ on the input state resulting in the state $U^\dagger \rho U$. Next, measure the rotated state in the canonical basis $\ket{j}, j \in [\dimH]$. From Born’s rule the probability of getting  the output $j$ is $p_{j}= \matrixelement{j}{U^{\dagger}\rho U}{j}$. 
%Suppose the outcome for the $i$th sample is $j_i\in [\dimH]$. 
Given an outcome $j$, prepare the state $\omega_j = U\ketbra{j}U^\dagger.$ %Hence, with $\rho_i$ we obtain the state $\omega_i$ with probability $P_{j_i}$. 
%As a result, the averaged post-measurement state given $U_i$ is   
% \begin{align*}
% \omega_i : = \sum_{b\in \set{0,1}^d} \matrixelement{b}{U^{\dagger}\rho_i U}{b}~ U\ketbra{b}U^\dagger. 
% \end{align*}
The expectation   $\EE_{\sim (J, U)}[\omega_J]$  over the measurement randomness ($p_j$) and the choice of  $U$ equals to $\Gamma[\rho],$ where $\Gamma$ is a mapping defined as
%For any operator $O$ on $\mathcal{H}$ define the linear mapping
\begin{align}\label{eq:Gamma}
\Gamma[O]:=\EE_{U}\Big[\sum_{j\in [\dimH]} \matrixelement{j}{U^{\dagger}O U}{j}~ U\ketbra{j}U^\dagger\Big],
\end{align}
for any operator $O$ on $\mathcal{H}$. 
%\begin{remark}
Observe that $\Gamma$ is a linear mapping on $\mathcal{B}(\mathcal{H})$ \addvR{and hence has an inverse denoted by $\Gamma^{-1}$. We note that $\Gamma^{-1}$ is the shadow channel $\mathcal{M}^{-1}$ introduced in \cite{Huang2020}.} %Let  $\Gamma^{-1}$ be the inverse map.
%\end{remark}
%\begin{align}\label{eq:Gamma}
%\Gamma[\rho_i]:=\EE_{U}\Big[\sum_{j\in [\dimH]} \matrixelement{j}{U^{\dagger}\rho_i U}{j}~ U\ketbra{j}U^\dagger\Big].
%\end{align}
%\begin{remark}
%$\Gamma$ is a linear mapping on $\mathcal{B}(\mathcal{H})$ with its inverse, denoted as $\Gamma^{-1}$, also a linear mapping.
%\end{remark}
We apply $\Gamma^{-1}$ on $\omega_j$ resulting in the so called shadow
\begin{align}\label{eq:rho hat shadow}
\hat{\rho}:=\Gamma^{-1}\big[U\ketbra{j}U^\dagger\big].
\end{align}
%Note that the shadow is a classical matrix giving some information about $\rho$. It is stored in a classical computer and hence can be copied several times. 
Note that $\hat{\rho}$ is a classical matrix and hence can be copied several times. Moreover, $\hat{\rho}$ is not a valid density operator as it is not necessarily a positive semi-definite matrix. However, it is an unbiased estimate of the original state. 
\begin{fact}[\cite{Huang2020}]\label{fact:hat rho is unbiased}
When $\mathcal{U}$ is tomographically complete, the classical shadow $\hat{\rho}$ is unbiased, that is $\EE_{U, J}[\hat{\rho}]=\rho$. % the expectation is taken over the random unitary $U$ and the measurement outcome $J$. %Moreover, $\EE[\hat{o}_j]=\tr{O_j \rho}$.
\end{fact}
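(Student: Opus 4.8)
The starting observation is that the classical shadow $\hat\rho$ is obtained from the post-measurement description $\omega_J:=U\ketbra{J}U^\dagger$ by applying the \emph{fixed} linear map $\Gamma^{-1}$, which does not depend on the realization $(U,J)$. Hence, by linearity of expectation,
\[
\EE_{U,J}\big[\hat\rho\big]=\EE_{U,J}\big[\Gamma^{-1}[\omega_J]\big]=\Gamma^{-1}\Big[\EE_{U,J}[\omega_J]\Big],
\]
where the interchange is harmless because $J$ ranges over the finite set $[\dimH]$ (so $\EE_J$ is a finite sum) and $\Gamma^{-1}$ is linear. It then suffices to show (a) $\EE_{U,J}[\omega_J]=\Gamma[\rho]$, and (b) $\Gamma$ is invertible so that $\Gamma^{-1}$ is well defined; together these give $\EE_{U,J}[\hat\rho]=\Gamma^{-1}[\Gamma[\rho]]=\rho$.

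Step (a) is a direct unpacking of Born's rule. Conditioning on the draw of $U$, the canonical-basis measurement of $U^\dagger\rho U$ returns outcome $j$ with probability $p_j=\matrixelement{j}{U^\dagger\rho U}{j}$, so $\EE\big[\omega_J\mid U\big]=\sum_{j\in[\dimH]}\matrixelement{j}{U^\dagger\rho U}{j}\,U\ketbra{j}U^\dagger$. Taking $\EE_U$ and comparing with the definition \eqref{eq:Gamma} of $\Gamma$ gives $\EE_{U,J}[\omega_J]=\Gamma[\rho]$ exactly.

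Step (b) is where tomographic completeness of $\mathcal{U}$ is used, and I expect it to be the only real content. Put $P_{U,j}:=U\ketbra{j}U^\dagger$ and note $\matrixelement{j}{U^\dagger O U}{j}=\tr(P_{U,j}O)$, so \eqref{eq:Gamma} becomes $\Gamma[O]=\EE_U\big[\sum_{j\in[\dimH]}\tr(P_{U,j}O)\,P_{U,j}\big]$ --- i.e. $\Gamma$ is the Hilbert--Schmidt frame superoperator of the family $\{P_{U,j}\}_{U,j}$. It is positive semidefinite for the Hilbert--Schmidt inner product, since $\tr(O^\dagger\Gamma[O])=\EE_U\sum_{j\in[\dimH]}\abs{\tr(P_{U,j}O)}^2\ge 0$, and this same identity shows $\Gamma[O]=0$ forces $\tr(P_{U,j}O)=0$ for every $U\in\supp(\mathcal{U})$ and $j\in[\dimH]$. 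But $\tr(P_{U,j}O)$ is precisely the outcome-$j$ statistic produced by rotating $O$ by $U$ and measuring in the canonical basis; requiring that these statistics (over all $U\in\supp(\mathcal{U})$) determine $O$ uniquely is exactly the definition of $\mathcal{U}$ being tomographically complete. Hence $\ker\Gamma=\{0\}$, so $\Gamma$ is injective, thus invertible on the finite-dimensional space $\mathcal{B}(\mathcal{H})$, and $\Gamma^{-1}$ is a genuine linear map. Plugging this back into the first display completes the argument.

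The main obstacle is item (b): making the notion ``tomographically complete'' precise as ``the rank-one operators $U\ketbra{j}U^\dagger$ span $\mathcal{B}(\mathcal{H})$ as $U$ ranges over $\supp(\mathcal{U})$'' and identifying this span condition with triviality of $\ker\Gamma$. Everything else is bookkeeping: Born's rule for (a) and linearity of expectation for the interchange.
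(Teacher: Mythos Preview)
Your proposal is correct and follows essentially the same route as the paper's discussion preceding the Fact: the paper already records that $\EE_{U,J}[\omega_J]=\Gamma[\rho]$ and that $\Gamma$ is linear with inverse $\Gamma^{-1}$, so $\EE_{U,J}[\hat\rho]=\Gamma^{-1}[\Gamma[\rho]]=\rho$; the Fact itself is cited from \cite{Huang2020} without a standalone proof. Your write-up adds value by actually justifying the invertibility step---identifying $\Gamma$ as the Hilbert--Schmidt frame superoperator of $\{U\ketbra{j}U^\dagger\}$ and showing $\ker\Gamma=\{0\}$ is exactly the tomographic-completeness condition---whereas the paper's text (``$\Gamma$ is a linear mapping \ldots\ and hence has an inverse'') glosses over this point.
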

For our problem, we apply the above process with a random unitary $U_i$ for each sample $\rho_i:=\ketbra{\phi_i}$ resulting in $n$ classical shadows $\hat{\rho}_i, i\in [n]$.} %Each time $U$ is generated randomly and independently. 
% \begin{fact}[\cite{Huang2020}]\label{fact:hat rho is unbiased}
% $\hat{\rho}_i$ is an unbiased, that is $\EE_{U_i, J_i}[\hat{\rho}_i]=\ketbra{\phi_i}, i\in [n]$, where the expectation is taken over the random unitary $U_i$ and measurement outcome $J_i$. %Moreover, $\EE[\hat{o}_j]=\tr{O_j \rho}$.
% \end{fact}
Next, we compute the expected loss per shadow as 
\begin{align*}
 \hat{L}_i(M)=  \sum_{\hat{y}} l(y, \hat{y}) \tr{M_{\hat{y}} \hat{\rho}_i}.
\end{align*}
This process is demonstrated in Figure \ref{fig:Estimation}.  Next, we average  it over all the shadows to find 
%\begin{align*}
%\<\MLoss\>__{\hat{\rho_i}\tensor \ketbra{y_i}}
%\end{align*}
%Since $j_i$ is classical, we can create multiple copies of $\hat{\rho}_i$. Particularly, we create one copy for each $\mathcal{M}\in \Cextreme$, measure it by the measurement, and calculate the incurred loss $l(y_i, \hat{y}_i)$. %The result is a (random) vector of loss values on the $i$th shadow for each  $\mathcal{M}\in \Cextreme$. 
%Given a measurement $\mathcal{M}\in \Cextreme$, let $\hat{Z}_{i}=l(y_i, \hat{y}_i)$ be the incurred loss values on the $i$th shadow 
%$\hat{\rho}_i, i\in [n]$. Then the shadow empirical loss of $\mathcal{M}$ is defined as 
\begin{align}\label{eq:shadow loss}
L_{\hat{\mathcal{S}}_n}(\mathcal{M}) := \frac{1}{n}\sum_{i=1}^n \hat{L}_i(M), 
\end{align}
where  $\hat{\mathcal{S}}_n$ represents the set of all the shadows $\hat{\rho}_i$ with their labels $y_i$. Since the classical shadows can be copied,  this procedure estimates the empirical loss of all the predictors in the concept class without asking for fresh samples as in the naive approach. 
%Therefore, $L_{\hat{\mathcal{S}}_n}$ is different from $\Lossemp$ which is based on the true samples.  
\begin{remark}
    The estimate $L_{\hat{\mathcal{S}}_n}$ is unbiased for any predictor $\mathcal{M}$, that is $\EE[L_{\hat{\mathcal{S}}_n}] = \Loss(\mathcal{M})$.
\end{remark}
\begin{remark}
    Huang \etal \cite{Huang2020} used the \textit{median of means} estimator for predicting the expectation of arbitrary observables. In our work, we slightly deviate and consider the empirical mean estimator to make sure our sample complexity bounds have expressions comparable with standard PAC bounds. 
\end{remark}
%Next, we provide a concentration analysis for the above estimator. 
\begin{figure}[tbp]
\centering
\includegraphics[width=0.5\textwidth]{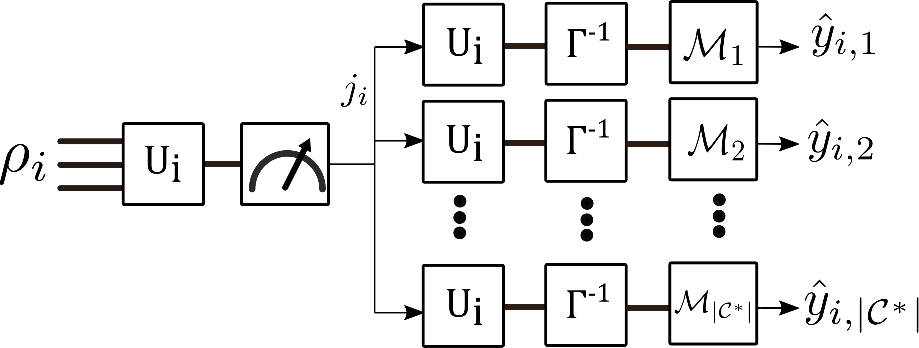}
\caption{The process for estimating the empirical loss of each measurement.    }
\label{fig:Estimation}
\end{figure}

\subsection{Concentration bounds}

%Theorem 1 of \cite{Huang2020} gives a bound on the  complexity of  the median of means estimator for predicting the expectation of $m$ arbitrary observables. The bound scales with $\log m$ and the maximum \textit{shadow norm} of the observables. 
We prove that a similar result as in \cite{Huang2020} also holds for the empirical average estimator, as opposed to the median of means. We proceed with presenting the following definition.
\begin{definition} \label{def:shadow norm}
The shadow norm of any operator $O$ on $\mathcal{H}$ is %defined as
\small
\begin{align*}
\norm{O}_{\emph{shadow}}: = \max_{\sigma\in \mathcal{D}[H]}\Big(\sum_{j\in [\dimH]}\hspace{-10pt} \matrixelement{j}{U^{\dagger}\sigma U}{j}~ \expval{U\Gamma^{-1}[O]U^\dagger}{j}^2\Big)^{1/2}. 
\end{align*}
\normalsize
\end{definition}
\addvR{Depending on the choice of $\mathcal{U}$ different bounds are obtained for shadow norm. When $\mathcal{U}$ is the Clifford group, Shadow norm  is related to the Hilbert-Schmidt norm  }\cite{Huang2020}:  
\begin{equation}\label{eq:shadow norm bound}
    \norm{O}_{\emph{shadow}} \leq \sqrt{3\tr{O^2}}. 
\end{equation}
When we use random Pauli measurements, the shadow norm relates to the locality of an operator. More precisely, for an operator $O$ that acts non-trivially on only $k$ out of $d$ qubits, the shadow norm is bounded as $\norm{O}_{\emph{shadow}} \leq 2^k \norm{O}_\infty$.  Our first result extends Theorem 1 of \cite{Huang2020} to empirical mean estimators.

\begin{proposition}\label{prop:shadow empirical estimator}
Let $\hat{\rho}_i, i\in [n]$  be the classical shadows of $n$ copies of a mixed state $\rho$, as in \eqref{eq:rho hat shadow}. Then, given $\delta \in (0,1)$ and  $m$ arbitrary observables $O_1, \cdots, O_m$,  the empirical average  $\hat{o}_j := \frac{1}{n}\sum_{i}\tr{O_j\hat{\rho}_i}$ satisfies the additive error $\varepsilon$ given that 
\[n \geq \order{\frac{1}{\varepsilon^2}\log(\frac{m}{\delta}) \max_i \norm{O_i-\frac{\tr{O_i}}{\dimH} I}^2_{\emph{shadow}}}.\]
\end{proposition}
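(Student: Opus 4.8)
The plan is to recognize $\hat o_j$ as an empirical average of i.i.d.\ bounded‑variance random variables, to bound their variance by the shadow norm of the traceless part of $O_j$, and then to apply a Bernstein‑type tail bound followed by a union bound over the $m$ observables. As a preliminary reduction: since each $U\ketbra{j}U^\dagger$ sums over $j$ to the identity and $\EE_U[I]=I$, the map of \eqref{eq:Gamma} satisfies $\Gamma[I]=I$, hence $\Gamma^{-1}[I]=I$; moreover $\tr{A\,\Gamma[B]}=\EE_U\big[\sum_k\matrixelement{k}{U^\dagger A U}{k}\,\matrixelement{k}{U^\dagger B U}{k}\big]$ is symmetric in $A,B$, so $\Gamma$ and $\Gamma^{-1}$ are self‑adjoint for the Hilbert–Schmidt inner product. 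Consequently $\tr{\hat\rho_i}=\tr{\Gamma^{-1}[U_i\ketbra{j_i}U_i^\dagger]}=\tr{U_i\ketbra{j_i}U_i^\dagger}=1$ for every shadow, and writing $O_{j,0}:=O_j-\frac{\tr{O_j}}{\dimH}I$ we get $\tr{O_j\hat\rho_i}-\tr{O_j\rho}=\tr{O_{j,0}\hat\rho_i}-\tr{O_{j,0}\rho}$. Hence $\hat o_j-\tr{O_j\rho}=\frac1n\sum_{i=1}^n X_i^{(j)}$ with $X_i^{(j)}:=\tr{O_{j,0}\hat\rho_i}-\tr{O_{j,0}\rho}$, a centered sum of i.i.d.\ terms (the $\hat\rho_i$ are i.i.d.\ and, by Fact~\ref{fact:hat rho is unbiased}, unbiased).

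The core step is the variance bound $\var\!\big(X_i^{(j)}\big)\le\norm{O_{j,0}}_{\mathrm{shadow}}^2$. Using self‑adjointness of $\Gamma^{-1}$, $\tr{O_{j,0}\hat\rho_i}=\tr{\Gamma^{-1}[O_{j,0}]\,U_i\ketbra{j_i}U_i^\dagger}=\matrixelement{j_i}{U_i^\dagger\Gamma^{-1}[O_{j,0}]U_i}{j_i}$; taking the second moment over the Born outcome $j_i\sim\big(\matrixelement{k}{U_i^\dagger\rho U_i}{k}\big)_k$ and over $U_i$ (and using that $\mathcal{U}$ is closed under inverse, so $U$ and $U^\dagger$ may be swapped after averaging) yields exactly $\EE_U\big[\sum_k\matrixelement{k}{U^\dagger\rho U}{k}\,\expval{U\Gamma^{-1}[O_{j,0}]U^\dagger}{k}^2\big]$, which is $\le\norm{O_{j,0}}_{\mathrm{shadow}}^2$ by Definition~\ref{def:shadow norm} evaluated at $\sigma=\rho$. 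Since the variance is at most the second moment, the bound follows; this is the place where Theorem~1 of \cite{Huang2020} is recovered, now for the empirical mean rather than the median of means.

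Finally I would apply a Bernstein/Bennett inequality to $\frac1n\sum_i X_i^{(j)}$ with variance proxy $v_j:=\norm{O_{j,0}}_{\mathrm{shadow}}^2$, giving $\PP\!\big[\abs{\hat o_j-\tr{O_j\rho}}>\varepsilon\big]\le 2\exp\!\big(-\Omega(n\varepsilon^2/v_j)\big)$ in the variance‑dominated regime, and then union‑bound over $j\in[m]$ to force the failure probability below $\delta$ once $n=\order{\varepsilon^{-2}\log(m/\delta)\max_j\norm{O_{j,0}}_{\mathrm{shadow}}^2}$. I expect the main obstacle to be that $X_i^{(j)}$ is not bounded on a dimension‑free scale — e.g.\ for random Clifford unitaries $\Gamma^{-1}[O_{j,0}]=(\dimH+1)O_{j,0}$, so $\abs{X_i^{(j)}}$ can be of order $\dimH$ — so a naive Hoeffding bound is useless and one must ensure the heavy‑tail term in Bernstein's inequality does not dominate. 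This can be handled either by checking that the per‑shadow variables are sub‑exponential on a scale set by the shadow norm in the parameter regime of interest, or, more robustly, by a median‑of‑means / blocking argument: partition the $n$ shadows into $\Theta(\log(m/\delta))$ blocks, apply Chebyshev inside each block with the variance bound above, and boost the per‑block success probability via a Chernoff bound on the count of good blocks; this gives the stated sample‑complexity scaling with no range assumption. Everything else — the identity $\tr{\hat\rho_i}=1$, self‑adjointness of $\Gamma$, the variance bound, and the union bound — is routine.
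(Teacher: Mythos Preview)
Your approach is essentially the paper's: bound $\var(\tr{O_j\hat\rho_i})$ by the shadow norm of the traceless part (the paper simply cites \cite[Lemma~1]{Huang2020} rather than rederiving it as you do), apply a Bernstein-type inequality (the paper invokes the Method of Bounded Variances, \cite[Thm.~8.2]{Dubhashi2009}, which gives $\PP\{|Y_n-\EE Y_n|>\epsilon\}\le 2\exp\{-\epsilon^2/(4V_n)\}$ under the side condition $\epsilon\le 2V_n/\max_i c_i$), and union-bound over $j\in[m]$. On your worry about the dimension-dependent range of $X_i^{(j)}$: the paper does not resolve this via median-of-means but simply absorbs it into a ``small enough $\epsilon$'' caveat coming from that side condition; your median-of-means fallback would change the estimator and therefore would not prove the proposition as stated about the empirical average $\hat o_j$.
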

\addvR{The proof is given in Section \ref{subsec:proof prop}. The proposition implies that the median-of-means can be replaced with an empirical averaging without performance loss. Similar observations have been made regarding 3-design circuits and the entire unitary
group for classical shadow tomography \cite{Helsen2023,Zhao2021}. } 
%hadow norm for Clifford groups is bounded as  \[\norm{O}_{\emph{shadow}} \leq \sqrt{3\tr{O^2}} \]
%For $k$-local measurements is bounded as 
% \[\norm{O}_{\emph{shadow}} \leq 2^k \norm{O}_\infty \]
Recall that the objective of \ac{QSRM} is to find $\min_{\mathcal{M}} L_{\hat{\mathcal{S}}_n}.$ However, the upper bound in the above theorem becomes loose when the concept class is large, for example when $|\mathcal{C}|$ is infinite.  Next, We show that only the extreme points of $\mathcal{C}$ are relevant. %enough for our search. 
%With this process,  we obtain the shadow loss for all measurements in  $\Cextreme$.  
\subsection{Extreme points of a concept class}
Let $\Cconvex$ denote the convex closure (envelope) of $\mathcal{C}$. Note that $\Cconvex$ is the set of all POVMs that can be written as a convex combination of measurements in $\mathcal{C}$. More precisely, POVMs of the form $\bar{\mathcal{M}}=\set{\bar{M}_{\hat{y}}, \hat{y}\in \mathcal{Y}}$ such that 
\begin{equation*} 
\bar{M}_{\hat{y}} = \sum_{j=1}^k \alpha_j M^j_{\hat{y}},\quad \forall \hat{y}\in \mathcal{Y},
\end{equation*}
where each $\mathcal{M}^j=\set{{M}^j_{\hat{y}}, \hat{y}\in \mathcal{Y}}$ belongs to $\mathcal{C}$, and $\alpha_j\in [0,1]$ with $\sum_j \alpha_j=1$. By definition, $\Cconvex$ is a convex set. Therefore, it must have extreme points.

\begin{definition} \label{def:C extreme}
Given any concept class $\mathcal{C}$, by $\Cextreme$ denote the extreme points of its convex closure.  
\end{definition}
With this perspective, QSRM is a brute-force search inside $\Cextreme$ and is equiped with the shadow empirical mean estimator of \eqref{eq:shadow loss}.
\addvR{This is summarized as Algorithm~\ref{alg:Shadow QERM} followed by Theorem~\ref{thm:QERM} proving its QPAC learnability.}

\begin{algorithm}[h]
\caption{QSRM}
\label{alg:Shadow QERM}
\DontPrintSemicolon
%\LinesNumbered
\KwIn{$\Cextreme$ of the concept class and $n$ training samples.}
%\KwOut{Index of the selected predictor in $\mathcal{C}$}
% \SetKwFunction{Compatible}{Compatibility Covering}
 %\SetKwFunction{QLD}{QLD}
 %\SetKwFunction{Predictor}{Predictor}
% \SetKwFunction{FEST}{FourierEstimation}
  
    \For{$i=1$ \KwTo $n$}{
   		Generate a unitary $U_i$ randomly.\;
   		Apply $U_i$ on $\rho_i$ as in Figure \ref{fig:Estimation}.\;
  		Measure along $\set{\ket{j}, j\in [\dimH]}$  to get $j_i$.\;
            Store the classical shadows $\hat{\rho}_i$ as in \eqref{eq:rho hat shadow}.\;
    }
	\For{each $\mathcal{M}$ in $\Cextreme$}{
 		Compute the empirical loss $L_{\hat{\mathcal{S}}_n}(\mathcal{M})$ as in \eqref{eq:shadow loss}.\;
 		}
% Compute the estimated empirical loss for each measurement as $\hat{L}(\mathcal{M}_{\ell})=\frac{1}{n}\sum_i \hat{z}_{i,\ell}$.\;
 \Return $\hat{\mathcal{M}}$ with the minimum $L_{\hat{\mathcal{S}}_n}$.\;
\end{algorithm}

\begin{theorem}\label{thm:QERM}
Suppose $\ell$ is a bounded loss function and  $\mathcal{C}$ is a measurement class with finite extreme points.  Then, QSRM (Algorithm \ref{alg:Shadow QERM}) agnostically QPAC learns $\mathcal{C}$  with quantum sample complexity bounded as
\begin{align*}
n_{\mathcal{C}}(\epsilon, \delta) = \order{\frac{V_{\Cextreme}}{\epsilon^2}\log\frac{|\Cextreme| }{\delta}},
\end{align*}
where $\mathcal{C}^*$ is the set of extreme points as in Definition \ref{def:C extreme} and
\[V_{\Cextreme}:= \max_{\mathcal{M}\in \Cextreme} \max_{y}\norm{\MLoss(y)-\frac{\tr{\MLoss(y)}}{\dimH} I}^2_{\emph{shadow}},\]
is the shadow norm as in Definition \ref{def:shadow norm}. 
\end{theorem}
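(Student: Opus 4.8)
The plan is to reduce the theorem to two ingredients already in hand: the concentration bound of Proposition~\ref{prop:shadow empirical estimator}, and a convexity (linearity) argument that lets us replace a brute-force search over the (possibly infinite) class $\mathcal{C}$ by a search over the finite extreme-point set $\Cextreme$. First I would observe that the shadow-based empirical loss $L_{\hat{\mathcal{S}}_n}(\mathcal{M})$ defined in \eqref{eq:shadow loss} is an affine function of the POVM operators $\{M_{\hat y}\}$: indeed $L_{\hat{\mathcal{S}}_n}(\mathcal{M}) = \frac1n\sum_i\sum_{\hat y} l(y_i,\hat y)\tr{M_{\hat y}\hat\rho_i}$, and likewise the true loss $\Loss(\mathcal{M})$ is affine in the $M_{\hat y}$ (it is the expectation of the loss observable $\MLoss$, which depends linearly on $\mathcal{M}$). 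Consequently, for any $\bar{\mathcal{M}}\in\Cconvex$ written as $\bar M_{\hat y}=\sum_j\alpha_j M^j_{\hat y}$ we have $\Loss(\bar{\mathcal{M}}) = \sum_j\alpha_j\Loss(\mathcal{M}^j)$, so $\inf_{\mathcal{M}\in\mathcal{C}}\Loss(\mathcal{M}) = \inf_{\bar{\mathcal{M}}\in\Cconvex}\Loss(\bar{\mathcal{M}}) = \min_{\mathcal{M}\in\Cextreme}\Loss(\mathcal{M})$, the last equality because a linear functional on a compact convex set attains its minimum at an extreme point. The same affine identities hold verbatim for $L_{\hat{\mathcal{S}}_n}$, so minimizing the shadow loss over $\Cextreme$ is equivalent to minimizing it over all of $\Cconvex\supseteq\mathcal{C}$, which justifies restricting Algorithm~\ref{alg:Shadow QERM} to $\Cextreme$.

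Next I would set up the standard agnostic-PAC skeleton. For each $y\in\mathcal{Y}$ apply Proposition~\ref{prop:shadow empirical estimator} with observables $O = \MLoss(y)$ (the loss-observable component indexed by outcome $y$; more precisely the family $\{L^M_z\tensor(\cdot)\}$ giving the per-sample loss), taking $m = |\Cextreme|\cdot|\mathcal{Y}|$ observables in total. The proposition guarantees that with $n = \order{\varepsilon^{-2}\log(m/\delta)\max\|\,\MLoss(y)-\tfrac{\tr{\MLoss(y)}}{\dimH}I\,\|^2_{\mathrm{shadow}}}$ samples, all the estimates are simultaneously within additive error $\varepsilon$ of their expectations, with probability $\ge 1-\delta$; summing the per-$y$ contributions gives $|L_{\hat{\mathcal{S}}_n}(\mathcal{M}) - \Loss(\mathcal{M})|\le |\mathcal{Y}|\varepsilon$ uniformly over $\mathcal{M}\in\Cextreme$. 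Absorbing $|\mathcal{Y}|$ (a fixed constant) and the $\max_y$ into the definition of $V_{\Cextreme}$, and rescaling $\varepsilon$, this reads: with probability $\ge1-\delta$, $\sup_{\mathcal{M}\in\Cextreme}|L_{\hat{\mathcal{S}}_n}(\mathcal{M})-\Loss(\mathcal{M})|\le \epsilon/2$ whenever $n = \order{\tfrac{V_{\Cextreme}}{\epsilon^2}\log\tfrac{|\Cextreme|}{\delta}}$.

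Finally I would close the argument with the usual two-$\epsilon/2$ chain: let $\hat{\mathcal{M}}=\argmin_{\mathcal{M}\in\Cextreme}L_{\hat{\mathcal{S}}_n}(\mathcal{M})$ be the output and let $\mathcal{M}^\star\in\Cextreme$ attain $\min_{\mathcal{M}\in\Cextreme}\Loss(\mathcal{M}) = \inf_{\mathcal{M}\in\mathcal{C}}\Loss(\mathcal{M})$ (using the convexity reduction above). On the good event, $\Loss(\hat{\mathcal{M}})\le L_{\hat{\mathcal{S}}_n}(\hat{\mathcal{M}})+\epsilon/2\le L_{\hat{\mathcal{S}}_n}(\mathcal{M}^\star)+\epsilon/2\le \Loss(\mathcal{M}^\star)+\epsilon = \inf_{\mathcal{M}\in\mathcal{C}}\Loss(\mathcal{M})+\epsilon$, which is exactly the agnostic QPAC guarantee of Definition~\ref{def:quantum PAC}. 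One technical point to handle carefully is the existence of the minimizer $\mathcal{M}^\star$ and the claim that $\Cconvex$ is compact with the minimum of the affine loss attained at an extreme point — here I would invoke that the set of POVMs on a finite-dimensional $\mathcal{H}$ is compact and convex, that $\Cconvex$ (the closed convex hull) is then compact, and Krein–Milman / the fact that a continuous affine functional on a compact convex set attains its extremum at an extreme point; the hypothesis that $\Cextreme$ is finite is what makes $\log|\Cextreme|$ meaningful. The main obstacle is not any single step but making the affine/convexity reduction airtight: one must check that $\Cextreme$ as defined (extreme points of the convex closure) genuinely suffices, i.e. that the infimum over $\mathcal{C}$ is not strictly smaller than the minimum over $\Cextreme$ — which follows because $\mathcal{C}\subseteq\Cconvex$ and the loss is affine, so $\inf_{\mathcal{C}}\Loss \ge \inf_{\Cconvex}\Loss = \min_{\Cextreme}\Loss \ge \inf_{\mathcal{C}}\Loss$ — and that the per-outcome shadow norms appearing in Proposition~\ref{prop:shadow empirical estimator} are correctly identified with the quantity $V_{\Cextreme}$ in the theorem statement.
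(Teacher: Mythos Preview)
Your proposal is correct and follows essentially the same route as the paper: the paper proves a short lemma that $\opt_{\mathcal{C}}=\opt_{\Cconvex}=\inf_{\mathcal{M}\in\Cextreme}\Loss(\mathcal{M})$ via linearity of $\Loss$ in $\mathcal{M}$, then applies Proposition~\ref{prop:shadow empirical estimator} with $O_j=\MLoss(y)$ over $\mathcal{M}\in\Cextreme$, and closes with the same two-$\epsilon/2$ chain you wrote. Your treatment is in fact a bit more careful (you make the $|\mathcal{Y}|$ factor explicit and invoke compactness/Krein--Milman), but note that in your final sandwich the inequality $\min_{\Cextreme}\Loss\ge\inf_{\mathcal{C}}\Loss$ is neither needed nor guaranteed in general --- all that is required for the QPAC guarantee is $\min_{\Cextreme}\Loss=\inf_{\Cconvex}\Loss\le\inf_{\mathcal{C}}\Loss$, which you already have.
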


\addvR{Note that $|\mathcal{C}^*|\leq |\mathcal{C}|$. Moreover,  $\Cextreme$ can be finite even when $\mathcal{C}$ is infinite. This leads to an interesting distinction compared to the classical learning:  the quantum sample complexity of $\mathcal{C}$ can scale with a rate smaller than  $\log |\mathcal{C}|$.}  It is worth noting that even though the Hilbert space is finite-dimensional, $\Cextreme$ could be infinite. In that case, one can combine QSRM with an $\epsilon$-netting procedure to get a bound on the sample complexity. %Here is an example: 

The theorem implies a tight sample complexity bound, up to a constant, for classes of bounded norm. 
\begin{corollary}
    The sample complexity of any concept class $\clC$, such that  $|\Cextreme|$ is finite and the Hilbert–Schmidt norm of each $\mathcal{M}\in \clC$ is bounded by an independent constant, scales as $\Theta( \log |\Cextreme|)$. 
\end{corollary}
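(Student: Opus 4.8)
The corollary has two halves: an $\order{\log|\Cextreme|}$ upper bound, which I would obtain as essentially a specialization of Theorem~\ref{thm:QERM}, and a matching lower bound witnessing tightness.

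\emph{Upper bound.} The plan is to show that, under the hypotheses, the shadow-norm quantity $V_{\Cextreme}$ is bounded by an absolute constant (independent of $|\Cextreme|$ and $\dimH$), so that $n_{\mathcal{C}}(\epsilon,\delta)=\order{\frac{V_{\Cextreme}}{\epsilon^2}\log\frac{|\Cextreme|}{\delta}}=\order{\log|\Cextreme|}$ at fixed $\epsilon,\delta$. Running QSRM with Clifford-group unitaries, the estimate \eqref{eq:shadow norm bound} gives for every $\mathcal{M}\in\Cextreme$ and label $y$ that $\norm{\MLoss(y)-\tfrac{\tr{\MLoss(y)}}{\dimH}I}_{\text{shadow}}^2\le 3\,\tr{(\MLoss(y)-\tfrac{\tr{\MLoss(y)}}{\dimH}I)^2}= 3\big(\tr{\MLoss(y)^2}-\tfrac{(\tr{\MLoss(y)})^2}{\dimH}\big)\le 3\,\tr{\MLoss(y)^2}$. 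Since $\MLoss(y)=\sum_{\hat y}l(y,\hat y)M_{\hat y}$ with $l\in[0,1]$ and each $M_{\hat y}\ge 0$, Cauchy--Schwarz in the Hilbert--Schmidt inner product yields $\tr{\MLoss(y)^2}\le\big(\sum_{\hat y}\sqrt{\tr{M_{\hat y}^2}}\big)^2\le|\mathcal{Y}|\sum_{\hat y}\tr{M_{\hat y}^2}=\order{1}$, because $\mathcal{Y}$ is finite and the Hilbert--Schmidt norm of $\mathcal{M}$ is bounded by hypothesis. The one point that needs care is passing to the \emph{extreme} points: convexity of $X\mapsto\tr{X^2}$ shows $\sum_{\hat y}\tr{\bar M_{\hat y}^2}$ does not increase under convex combinations of POVMs in $\mathcal{C}$, and continuity extends the bound to the closure $\Cconvex$, hence to $\Cextreme$. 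This gives $V_{\Cextreme}=\order{1}$ and the $\order{\log|\Cextreme|}$ upper bound.

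\emph{Matching lower bound.} For tightness I would exhibit, for arbitrarily large $N$, a concept class with $|\Cextreme|=N$, bounded Hilbert--Schmidt norm, and quantum sample complexity $\Omega(\log N)$ at some fixed $(\epsilon,\delta)$. The natural plan is a reduction to classical agnostic PAC: since QPAC subsumes classical PAC (classical concept classes and samples are mutually diagonalized in the canonical basis), one realizes a classical binary concept class $\mathcal{H}$ by mutually diagonalized projective POVMs --- measure the input in a fixed basis and output $f$ of the outcome --- each of which is extreme in its own convex hull, so $|\Cextreme|=|\mathcal{H}|$. Choosing $\mathcal{H}$ with $\mathrm{VCdim}(\mathcal{H})=\Theta(\log N)$, the classical agnostic lower bound $\Omega(\mathrm{VCdim}(\mathcal{H})/\epsilon^2)$ forces $\Omega(\log N)=\Omega(\log|\Cextreme|)$ samples for constant $\epsilon,\delta$, which with the upper bound gives the claimed $\Theta(\log|\Cextreme|)$.

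\emph{Main obstacle.} I expect the real difficulty to be reconciling the lower-bound instance with the bounded-Hilbert--Schmidt-norm hypothesis: the plain diagonal embedding of a VC-dimension-$d$ class has POVM elements of rank $\Theta(d)$, whose Hilbert--Schmidt norm scales like $\sqrt{d}=\sqrt{\log N}$ rather than being $\order{1}$ (indeed any POVM whose loss operators approximate a balanced classical function must have rank $\Omega(d)$). Removing this seems to require either (i) dilating the hard instance into a Hilbert space of dimension $\mathrm{poly}(\log N)$ so that the loss-relevant operators $\MLoss(y)-\tfrac{\tr{\MLoss(y)}}{\dimH}I$ retain constant Hilbert--Schmidt norm while the $\Omega(\log N)$ hardness survives, or (ii) replacing the classical reduction by a genuinely quantum family of mutually \emph{incompatible} bounded-rank POVMs and proving $\Omega(\log N)$ directly via an information-theoretic (Fano/Holevo) argument on the joint $n$-sample state. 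By contrast, the upper bound falls out of Theorem~\ref{thm:QERM} as soon as $V_{\Cextreme}=\order{1}$, so the lower-bound construction is where I would expect to spend the effort.
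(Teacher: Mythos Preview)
Your upper bound is exactly the paper's argument---apply Theorem~\ref{thm:QERM} together with the Clifford shadow-norm estimate \eqref{eq:shadow norm bound}---only spelled out in more detail; the extra step you take (extending the HS-norm bound from $\mathcal{C}$ to $\Cextreme$ via convexity of $X\mapsto\tr{X^2}$ and continuity) is correct and is a point the paper leaves implicit.

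For the lower bound, the paper's entire proof is the single sentence: ``the lower bound is implied from the fact that quantum sample complexity is not smaller than classical one, and that the classical sample complexity of a finite concept class $\clC_{\text{classical}}$ scales as $\Theta(\log|\clC_{\text{classical}}|)$.'' That is precisely the classical-embedding reduction you propose, but stated without any attention to whether the embedded class satisfies the bounded-HS-norm hypothesis. The obstacle you identify---that the diagonal embedding of a VC-dimension-$d$ class produces projections of rank $\Theta(d)$, hence HS norm $\Theta(\sqrt{\log N})$ rather than $O(1)$---is a genuine tension the paper simply does not engage with. Your two suggested workarounds (a dilation keeping $\MLoss(y)-\tfrac{\tr{\MLoss(y)}}{\dimH}I$ of bounded HS norm, or a direct Fano/Holevo argument on bounded-rank incompatible POVMs) go well beyond what the paper actually does. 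So your proposal is not missing anything relative to the paper; if anything, you have surfaced a gap in the paper's own lower-bound argument, or at least an ambiguity in what ``bounded by an independent constant'' is meant to be independent of.
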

\begin{proof}
    The upper bounds follows from Theorem \ref{thm:QERM} and \eqref{eq:shadow norm bound}. The lower bound is implied from the fact that quantum sample complexity is not smaller than classical one, and that the classical sample complexity of a finite concept class $\clC_\emph{classical}$ scales as $\Theta(\log |\clC_\emph{classical}|)$.
\end{proof}

%\section{Algorithm} 
%In this section, we present our quantum learning algorithm called quantum shadow risk minimization (QSRM). %This approach is straightforward in classical ERM, as one computes the empirical risk of each predictor and finds the one minimizing it. Similarly, in shadow QERM we measure the  empirical loss of each measurement candidate in the class. However, due to no-cloning and measurement incompatibility, straightforward approaches as in classical will not work.   In what follows, we describe the process to measure the empirical loss. 
%Further, we propose a concentration analysis and prove the main theorem. 

%$\hat{\rho}_i$ and then $\hat{z}_{i, \ell}$ for $i\in [n]$ and $\ell\in [\abs{\Cextreme}]$.
% \begin{align*}
%    \EE[\hat{Z}_{i,\ell}] = \sum_{\hat{y}_i\in \clY}  l(y_i, \hat{y}_i) \tr{M_{\ell, \hat{y}_i}  \hat{\rho}_i}.
% \end{align*}
%We measure the shadow loss of all the measurements in $\Cextreme$.  
%Let $\mathcal{M}_{\ell}$ %$=\set{M_{\ell, \hat{y}}: \hat{y}\in \clY}$ 
%denote the $\ell$th measurement in $\Cextreme$. Then its shadow loss is $\hat{L}(\mathcal{M}_{\ell})$. 
%This process is demonstrated in Figure \ref{fig:Estimation}. We repeat this process for each sample, to obtain $\hat{\rho}_i$ and then $\hat{z}_{i, \ell}$ for $i\in [n]$ and $\ell\in [\abs{\Cextreme}]$. Lastly, we compute the following as the estimate of the empirical loss for each $\mathcal{M}_{\ell} \in \Cextreme$:
% \begin{align}\label{eq:pauli estimate }
% \hat{L}(\mathcal{M}_\ell) := \frac{1}{n}\sum_{i=1}^n \hat{z}_{i, \ell}. 
% \end{align}
 %$\bfs$ with $|\supp(\bfs)|\leq k$.

\section{Proof of the main results}
\subsection{Proof of Proposition \ref{prop:shadow empirical estimator}}\label{subsec:proof prop}

The proof  follows from a concentration of measures for bounded variances  \cite[Theorem 8.2]{Dubhashi2009}:

\begin{lem}[Method of Bounded Variances]\label{lem:bounded variances}
Let $X_1, . . ., X_n$ be a set of random variables and let $Y_n = f(X_1, ... , X_n)$, where $f$ a function such that  $\EE[Y_n]<\infty$.
Let $D_i=\EE[Y_n|\bfX^i]-\EE[Y_n|\bfX^{i-1}]$ and $|D_i|\leq c_i$ for some constants $c_i>0$. Also let,  $V_n=\sum_{i=1}^n \sup_{\bfx^{i-1}} \var(D_i|\bfx^{i-1})$. Then, 
\begin{align*}
\prob{\abs{Y_n - \EE[Y_n]}>\epsilon }\leq 2 \exp{-\frac{\epsilon^2}{4V_n}},
\end{align*}
where $\epsilon\leq \frac{2V_n}{\max_i c_i}$.
\end{lem}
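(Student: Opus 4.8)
The plan is to recognize Lemma \ref{lem:bounded variances} as a Bernstein-type concentration inequality for the Doob martingale associated with $Y_n$, and to prove it by the exponential-moment (Chernoff) method together with an inductive control of the conditional moment-generating functions. First I would fix the martingale structure. Writing $\bfX^i := (X_1,\dots,X_i)$ and setting $M_i := \EE[Y_n \mid \bfX^i]$, we have $M_0 = \EE[Y_n]$, $M_n = Y_n$, and $D_i = M_i - M_{i-1}$ is a martingale difference sequence; in particular $\EE[D_i \mid \bfX^{i-1}] = 0$ and $Y_n - \EE[Y_n] = \sum_{i=1}^n D_i$. Applying Markov's inequality to $e^{\lambda(Y_n - \EE[Y_n])}$ gives, for every $\lambda > 0$, the Chernoff estimate $\prob{Y_n - \EE[Y_n] > \epsilon} \leq e^{-\lambda\epsilon}\,\EE[e^{\lambda \sum_i D_i}]$.

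The core step is a conditional MGF bound. Conditioned on $\bfX^{i-1}$, the variable $D_i$ is zero-mean, is bounded by $\abs{D_i} \leq c_i$, and has conditional variance $\var(D_i \mid \bfX^{i-1})$. Using the elementary inequality $e^x \leq 1 + x + x^2$ valid for all $x \leq 1$ (which follows from monotonicity of $x \mapsto (e^x-1-x)/x^2$ and its value $e-2<1$ at $x=1$), whenever $\lambda c_i \leq 1$ one obtains $\EE[e^{\lambda D_i}\mid \bfX^{i-1}] \leq 1 + \lambda\,\EE[D_i\mid\bfX^{i-1}] + \lambda^2\,\EE[D_i^2\mid\bfX^{i-1}] = 1 + \lambda^2\,\var(D_i \mid \bfX^{i-1}) \leq \exp\!\big(\lambda^2 \sup_{\bfx^{i-1}}\var(D_i\mid\bfx^{i-1})\big)$, where the last step uses $1+u \leq e^u$. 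Peeling off one factor at a time via the tower property (conditioning on $\bfX^{n-1}$, then $\bfX^{n-2}$, and so on) then yields $\EE[e^{\lambda\sum_i D_i}] \leq \exp(\lambda^2 V_n)$ for every $\lambda$ with $\lambda \max_i c_i \leq 1$.

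Finally I would optimize the Chernoff bound. Combining the two displays gives $\prob{Y_n - \EE[Y_n] > \epsilon} \leq \exp(-\lambda\epsilon + \lambda^2 V_n)$, minimized at $\lambda^\star = \epsilon/(2V_n)$, which produces the exponent $-\epsilon^2/(4V_n)$. The admissibility constraint $\lambda^\star \leq 1/\max_i c_i$ is exactly the hypothesis $\epsilon \leq 2V_n/\max_i c_i$, so $\lambda^\star$ lies in the legal range. Running the identical argument on $-Y_n$ (equivalently on the differences $-D_i$, which form the same kind of martingale difference sequence) bounds the lower tail, and a union bound over the two tails supplies the factor of $2$, giving the stated inequality.

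The main obstacle I anticipate is the bookkeeping around the valid range of $\lambda$: the conditional MGF estimate holds only while $\lambda c_i \leq 1$, so one must verify that the Chernoff-optimal $\lambda^\star$ stays inside this window, which is precisely why the statement is restricted to $\epsilon \leq 2V_n/\max_i c_i$. A secondary technical point is justifying $M_n = Y_n$ almost surely and the interchange of expectation with conditioning, which needs $Y_n$ to be integrable (guaranteed by $\EE[Y_n] < \infty$ together with the boundedness of the $D_i$). Since this lemma is the content of \cite[Theorem 8.2]{Dubhashi2009}, these measure-theoretic details can be imported directly rather than re-derived.
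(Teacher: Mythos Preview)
Your proof is correct and is essentially the standard Bernstein-type martingale argument (Doob martingale, Chernoff method, conditional MGF control via $e^x\le 1+x+x^2$ for $x\le 1$, tower-property peeling, and optimization in $\lambda$). Note, however, that the paper does not give its own proof of this lemma: it is quoted verbatim as \cite[Theorem~8.2]{Dubhashi2009} and used as a black box, so there is no paper proof to compare against beyond observing that your sketch reproduces the textbook derivation behind that reference.
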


\begin{corollary}
For any $O$, and small enough $\epsilon>0$,  the shadow empirical loss in \eqref{eq:shadow loss} satisfies
\begin{align*}
\prob{\abs{\hat{o}  - \tr{O\rho}}> \epsilon } \leq 2\exp{\frac{-n\epsilon^2}{4\max_i \var(\tr{O\hat{\rho}_i})}}.
\end{align*}
\end{corollary}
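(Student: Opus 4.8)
The plan is to apply the Method of Bounded Variances (Lemma~\ref{lem:bounded variances}) to the random variable $Y_n := \hat{o} = \frac{1}{n}\sum_{i=1}^n \tr{O\hat{\rho}_i}$, regarded as a function $f(X_1,\dots,X_n)$ of the independent per-sample randomness $X_i := (U_i, j_i)$ used to build the $i$-th classical shadow $\hat{\rho}_i$. Independence of $X_1,\dots,X_n$ is the structural fact that makes the relevant martingale differences collapse to a single-term form: in Algorithm~\ref{alg:Shadow QERM} each sample is processed with a freshly and independently drawn unitary followed by a canonical-basis measurement, and $\tr{O\hat{\rho}_k}$ depends on $X_k$ only.

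First I would compute the martingale difference $D_i = \EE[Y_n\mid \bfX^i] - \EE[Y_n\mid \bfX^{i-1}]$. Conditioning on $\bfX^i$ fixes the first $i$ terms of the average and leaves the remaining ones at their mean; invoking Fact~\ref{fact:hat rho is unbiased} (tomographic completeness of $\mathcal{U}$), $\EE[\tr{O\hat{\rho}_k}] = \tr{O\rho}$ for every $k$, so
\[
D_i = \tfrac{1}{n}\big(\tr{O\hat{\rho}_i} - \tr{O\rho}\big),
\qquad \EE[Y_n] = \tr{O\rho}.
\]
The second identity pins down the centering in the claimed bound. Since $\mathcal{H}$ is finite-dimensional and $U_i$ ranges over a bounded (for the Clifford group, finite) set while $j_i\in[\dimH]$, the fixed linear map $\Gamma^{-1}$ applied to the unit-trace projector $U_i\ketbra{j_i}U_i^\dagger$ produces $\hat\rho_i$ lying in a bounded region, so $\tr{O\hat\rho_i}$ is bounded almost surely and $|D_i|\le c_i$ for a finite constant $c_i$ proportional to $1/n$.

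Next I would evaluate $V_n = \sum_{i=1}^n \sup_{\bfx^{i-1}} \var(D_i\mid \bfx^{i-1})$. Independence again removes the conditioning, $\var(D_i\mid \bfx^{i-1}) = \var(D_i) = \tfrac{1}{n^2}\var(\tr{O\hat{\rho}_i})$, so
\[
V_n = \frac{1}{n^2}\sum_{i=1}^n \var\!\big(\tr{O\hat{\rho}_i}\big) \;\le\; \frac{1}{n}\max_i \var\!\big(\tr{O\hat{\rho}_i}\big).
\]
Plugging $V_n$ and $\EE[Y_n]=\tr{O\rho}$ into Lemma~\ref{lem:bounded variances} yields $\prob{|\hat{o} - \tr{O\rho}| > \epsilon} \le 2\exp(-\epsilon^2/(4V_n)) \le 2\exp(-n\epsilon^2/(4\max_i \var(\tr{O\hat{\rho}_i})))$, which is exactly the asserted inequality. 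The hypothesis ``small enough $\epsilon$'' is precisely the admissibility condition $\epsilon \le 2V_n/\max_i c_i$ of the lemma; since both $V_n$ and $\max_i c_i$ scale like $1/n$, their ratio is a constant independent of $n$, so the restriction on $\epsilon$ does not worsen as the sample size grows.

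I expect the only genuine subtlety to be justifying the almost-sure boundedness of $\tr{O\hat{\rho}_i}$ required for the constants $c_i$ — one must observe that although the shadow $\hat\rho_i$ is not a density operator, it still lies in a bounded set because $\Gamma^{-1}$ is a fixed linear map and its inputs are unit-trace rank-one projectors drawn from a compact group orbit; everything else is bookkeeping enabled by the mutual independence of the per-sample randomness.
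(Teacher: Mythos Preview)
Your proposal is correct and follows essentially the same route as the paper: apply Lemma~\ref{lem:bounded variances} with $Y_n=\hat{o}$, use independence of the per-sample shadows to get $D_i=\tfrac{1}{n}(\tr{O\hat\rho_i}-\EE[\tr{O\hat\rho_i}])$ and $V_n\le \tfrac{1}{n}\max_i\var(\tr{O\hat\rho_i})$, then read off the bound. You are in fact more careful than the paper, which omits the verification that $|D_i|\le c_i$ and the interpretation of the ``small enough $\epsilon$'' hypothesis.
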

\begin{proof}
The corollary is proved from Lemma \ref{lem:bounded variances} with $f=\hat{o} := \frac{1}{n}\sum_{i}\tr{O\hat{\rho}_i}$ implying that  
%\begin{align*}
$D_i=\frac{1}{n} (\tr{O\hat{\rho}_i}-\EE[\tr{O\hat{\rho}_i}]).$
%\end{align*}
Since, $\hat{\rho}_i$'s are mutually independent and identically distributed then 
%\begin{align*}
$V_n=\frac{1}{n^2}\sum_i \var(\tr{O\hat{\rho}_i}) \leq \frac{1}{n}\max_i \var(\tr{O\hat{\rho}_i}),$
%\end{align*}
which gives the desired  statement.
\end{proof}

Applying this result for $O=O_i$ in the proposition and a union bound give 
$\max_j \abs{\hat{o}_j  - \tr{O_j\rho}} \leq \varepsilon$
%\end{align*}
%\begin{align}\label{eq:loss estimation error}
%\max_{\mathcal{M}\in \Cextreme} \big|\hat{L}(\mathcal{M}) - \Loss(\mathcal{M})\big| \leq \epsilon % &= \order{\sqrt{\frac{1}{n}\log(\frac{|\Cextreme| }{\delta})}}.
%%%&= \order{\sqrt{\frac{k}{n}\log(\frac{d}{k \delta})}}.
%\end{align}
 with probability greater than
 \begin{align}\label{eq:prob 1}
 1 - 2m \exp{\frac{-n\epsilon^2}{4 \max_{i,j} \var(\tr{O_j\hat{\rho}_i})}}.
 \end{align}
From \cite[Lemma 1]{Huang2020}, the variance terms are upper bounded by the shadow norm as 
$\max_{j} \norm{O_j-\frac{\tr{O_j}}{\dimH} I}^2_{\emph{shadow}}.$
Equating the bound on the probability to $\delta$ gives the desired result. 
%\begin{align}\label{eq:loss estimation error}
%%\max_{\mathcal{M}\in \Cextreme} \big|\hat{L}(\mathcal{M}) - \Loss(\mathcal{M})\big| \leq & = 
%n  = \order{\frac{V_{\Cextreme}}{\epsilon^2}\log(\frac{|\Cextreme| }{\delta})}.
%%%&= \order{\sqrt{\frac{k}{n}\log(\frac{d}{k \delta})}}.
%\end{align}

%\begin{lem}[\cite{Huang2020}]\label{lem:variance shadow}
%For any observable $O$ and a classical shadow $\hat{\rho}$, if $\hat{o}=\tr{O\hat{\rho}},$ then 
%\begin{align*}
%\var(\hat{o})\leq \norm{O-\frac{\tr{O}}{\dimH} I}^2_{\emph{shadow}}
%\end{align*}
%\end{lem}
%
%\begin{lemma}[Generic Chernoff bound]\label{lem:chernoff}
%	For any $a\in \RR$ and  random variable $X$ with moment generating function $M_X(t):=\EE[e^{tX}]$,
%	\begin{align*}
%		\PP[X\geq a]\leq \inf_{t>0}M_X(t)e^{-ta}.
%	\end{align*}
%\end{lemma}

\subsection{Proof of Theorem \ref{thm:QERM}.} 

We start with the following lemma: 
\begin{lem}\label{lem:extreme points}
Let $\Cconvex$ be the convex closure of $\mathcal{C}$ and $\Cextreme$ be the set of all extreme points of $\Cconvex$. Then, 
$\opt_{\mathcal{C}}=\opt_{\Cconvex}=\inf_{\mathcal{M}\in \Cextreme} \Loss(\mathcal{M}).$
\end{lem}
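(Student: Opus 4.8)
The statement to prove is Lemma~\ref{lem:extreme points}: defining $\opt_{\mathcal{C}} := \inf_{\mathcal{M}\in\mathcal{C}}\Loss(\mathcal{M})$ and similarly for $\Cconvex$ and $\Cextreme$, we have $\opt_{\mathcal{C}}=\opt_{\Cconvex}=\inf_{\mathcal{M}\in\Cextreme}\Loss(\mathcal{M})$.

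The plan is to exploit two facts: first, that $\mathcal{M}\mapsto\Loss(\mathcal{M})$ is an \emph{affine} (linear) functional of the POVM operators $\{M_{\hat y}\}$; and second, that $\Cextreme\subseteq\Cconvex$ with $\Cconvex$ equal to the convex hull (closure) of both $\mathcal{C}$ and $\Cextreme$. From the definition of $\Loss$ via Born's rule, $\Loss(\mathcal{M}) = \EE[l(Y,\hat Y)] = \sum_{\hat y}\EE_{D}\big[ l(Y,\hat y)\,\matrixelement{\phi}{M_{\hat y}}{\phi}\big]$, which is manifestly linear in the tuple $(M_{\hat y})_{\hat y\in\mathcal{Y}}$; equivalently it is the expectation value $\tr\{\MLoss\,\rho_{XY}\}$ of the loss observable from \eqref{eq:Loss operators}, which is linear in $\mathcal{M}$. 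This linearity is the engine of the whole argument.

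First I would show $\opt_{\Cconvex}=\opt_{\mathcal{C}}$. The inequality $\opt_{\Cconvex}\le\opt_{\mathcal{C}}$ is immediate since $\mathcal{C}\subseteq\Cconvex$. For the reverse, take any $\bar{\mathcal{M}}\in\Cconvex$, so $\bar M_{\hat y}=\sum_j\alpha_j M^j_{\hat y}$ with $\mathcal{M}^j\in\mathcal{C}$, $\alpha_j\ge 0$, $\sum_j\alpha_j=1$. By linearity, $\Loss(\bar{\mathcal{M}})=\sum_j\alpha_j\Loss(\mathcal{M}^j)\ge\min_j\Loss(\mathcal{M}^j)\ge\opt_{\mathcal{C}}$; taking the infimum over $\bar{\mathcal{M}}$ gives $\opt_{\Cconvex}\ge\opt_{\mathcal{C}}$. (If $\Cconvex$ is taken as a topological closure, one extends this by continuity of the linear functional $\Loss$ on the finite-dimensional operator space, which is automatic.) Next, $\inf_{\Cextreme}\Loss\ge\opt_{\Cconvex}$ is trivial since $\Cextreme\subseteq\Cconvex$. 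For the matching direction, $\opt_{\Cconvex}\ge\inf_{\Cextreme}\Loss$: $\Cconvex$ is a compact convex subset of the finite-dimensional real vector space of Hermitian-operator tuples (it is closed and bounded — each $M_{\hat y}$ satisfies $0\le M_{\hat y}\le I$), so by the Krein–Milman / Minkowski theorem $\Cconvex=\mathrm{conv}(\Cextreme)$, and the linear functional $\Loss$ attains its minimum over $\Cconvex$ at an extreme point. Hence every value $\Loss(\bar{\mathcal{M}})$ for $\bar{\mathcal{M}}\in\Cconvex$ is at least $\min_{\mathcal{M}\in\Cextreme}\Loss(\mathcal{M})$, giving the claim. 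Chaining the inequalities yields equality throughout.

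I expect the only real subtlety — the ``main obstacle'' — to be the precise topological bookkeeping around the word ``closure'': namely verifying that $\Cconvex$ is genuinely compact so that extreme points exist and the infimum is a minimum attained at one of them, and that $\mathrm{conv}(\Cextreme)$ already exhausts $\Cconvex$ (Krein–Milman in finite dimensions). These are standard in finite-dimensional convex analysis — boundedness follows from $0\preceq M_{\hat y}\preceq I$ for every outcome, and closedness is part of the definition of $\Cconvex$ as the convex \emph{closure} — so the argument is essentially routine once linearity of $\Loss$ is in hand; I would simply cite Minkowski's theorem for the finite-dimensional Krein–Milman statement rather than reprove it.
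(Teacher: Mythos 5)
Your proposal is correct and follows essentially the same route as the paper's own proof: linearity (affinity) of $\Loss$ in the POVM, which forces $\opt_{\mathcal{C}}=\opt_{\Cconvex}$ and places the optimum at extreme points of $\Cconvex$. You are in fact more careful than the paper, which simply asserts that a convex (linear) optimization attains its optimum at extreme points, whereas you justify this via compactness of $\Cconvex$ and Krein--Milman/Minkowski in finite dimensions — a worthwhile detail, but not a different approach.
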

\begin{proof}
    Note that $\Loss(\mathcal{M})$ is linear in $\mathcal{M}$ as it is equal to the expectation of the loss observable $\MLoss$. This is due to the linearity of the trace and the definition of $\MLoss$.  As a result, given that $\mathcal{C}\subseteq \Cconvex$ and that
      % \begin{align*}
       $\opt_{\Cconvex} =\inf_{\mathcal{M}\in \Cconvex} \Loss(\mathcal{M})$
   % \end{align*}
    we find that $ \opt_{\Cconvex}= \opt_{\mathcal{C}}$. Moreover, since the above expression is a convex optimization, then the optimal values occur at the extreme points of $\Cconvex$. Hence the proof is complete. 
\end{proof}

This result implies that QPAC learning of $\mathcal{C}$ is reduced to its extreme points $\Cextreme$. %We proceed by proving Theorem \ref{thm:QERM}. 
The proof of Theorem \ref{thm:QERM} follows from Proposition~\ref{prop:shadow empirical estimator} and Lemma \ref{lem:extreme points} with the following argument.
For any fixed $y$, and $\mathcal{M}$ let $\MLoss(y)$ be the expected loss when the label is $y$. We apply Proposition~\ref{prop:shadow empirical estimator} with $O_j = \MLoss(y)$ for $\mathcal{M}\in \Cextreme$. Let 
\[V_{\Cextreme}:= \max_{\mathcal{M}\in \Cextreme} \max_{y}\norm{\MLoss(y)-\frac{\tr{\MLoss(y)}}{\dimH} I}^2_{\emph{shadow}}.\]
Then, the proposition  gives the following sample complexity bound: 
%\begin{align}\label{eq:loss estimation error}
%\max_{\mathcal{M}\in \Cextreme} \big|\hat{L}(\mathcal{M}) - \Loss(\mathcal{M})\big| \leq & = 
$n  = \order{\frac{V_{\Cextreme}}{\epsilon^2}\log(\frac{|\Cextreme| }{\delta})}.$
Now, let $\widehat{\mathcal{M}}$ and $\mathcal{M}^*$ be the measurements minimizing  $L_{\hat{\mathcal{S}}_n}$ and $\Loss$, respectively.  Then,  with probability $(1-\delta)$ we have that:
\begin{align*}
\Loss(\widehat{\mathcal{M}}) \leq \hat{L}(\widehat{\mathcal{M}})+\frac{\epsilon}{2} %\\
&\leq  \hat{L}({\mathcal{M}}^*)+\frac{\epsilon}{2} \leq \Loss({\mathcal{M}}^*) +\epsilon.
\end{align*}
The left-hand side is the loss of the selected predictor by QSRM (Algorithm \ref{alg:Shadow QERM}), and the right-hand side equals $\opt_{\mathcal{C}}+\epsilon$ and hence the proof is complete.

\section*{Conclusion}
This paper studies the learning of quantum measurement classes. It introduces a novel quantum algorithm called QSRM for learning  quantum concept classes. Using this algorithm,  a new upper bound on the quantum sample complexity is derived. It is shown that the quantum sample complexity grows at most with the logarithm of the size of the extreme points of the convex closure of the concept class. This is a significant improvement over prior results. The approach is based on a novel method to estimate the empirical loss of the concept class via creating random shadows of the training samples. With that QSRM algorithm can perform risk minimization while abiding to no-cloning, state collapse, and measurement incompatibility.

 \section*{Acknowledgment}

 This work was partially supported by
 the NSF Center for Science of Information (CSoI)
 Grant CCF-0939370, and in addition by NSF Grants CCF-1524312, CCF-2006440,
 and CCF-2211423.

 \bibliographystyle{IEEEtran}
%\bibliography{ref_Quantum,ref_Learning}
\bibliography{main}

% Generated by IEEEtran.bst, version: 1.14 (2015/08/26)
\begin{thebibliography}{10}
\providecommand{\url}[1]{#1}
\csname url@samestyle\endcsname
\providecommand{\newblock}{\relax}
\providecommand{\bibinfo}[2]{#2}
\providecommand{\BIBentrySTDinterwordspacing}{\spaceskip=0pt\relax}
\providecommand{\BIBentryALTinterwordstretchfactor}{4}
\providecommand{\BIBentryALTinterwordspacing}{\spaceskip=\fontdimen2\font plus
\BIBentryALTinterwordstretchfactor\fontdimen3\font minus
  \fontdimen4\font\relax}
\providecommand{\BIBforeignlanguage}[2]{{%
\expandafter\ifx\csname l@#1\endcsname\relax
\typeout{** WARNING: IEEEtran.bst: No hyphenation pattern has been}%
\typeout{** loaded for the language `#1'. Using the pattern for}%
\typeout{** the default language instead.}%
\else
\language=\csname l@#1\endcsname
\fi
#2}}
\providecommand{\BIBdecl}{\relax}
\BIBdecl

\bibitem{Giovannetti2008}
V.~Giovannetti, S.~Lloyd, and L.~Maccone, ``Quantum random access memory,''
  \emph{Physical Review Letters}, vol. 100, no.~16, apr 2008.

\bibitem{Park2019}
D.~K. Park, F.~Petruccione, and J.-K.~K. Rhee, ``Circuit-based quantum random
  access memory for classical data,'' \emph{Scientific Reports}, vol.~9, no.~1,
  mar 2019.

\bibitem{Lloyd2013}
S.~Lloyd, M.~Mohseni, and P.~Rebentrost, ``Quantum algorithms for supervised
  and unsupervised machine learning,'' \emph{arXiv:1307.0411}, 2013.

\bibitem{Lloyd2014}
------, ``Quantum principal component analysis,'' vol.~10, no.~9, pp. 631--633,
  jul 2014.

\bibitem{Carrasquilla2017}
J.~Carrasquilla and R.~G. Melko, ``Machine learning phases of matter,''
  vol.~13, no.~5, pp. 431--434, feb 2017.

\bibitem{Carleo2017}
G.~Carleo and M.~Troyer, ``Solving the quantum many-body problem with
  artificial neural networks,'' vol. 355, no. 6325, pp. 602--606, feb 2017.

\bibitem{Broughton2020}
M.~Broughton, G.~Verdon, T.~McCourt, A.~J. Martinez, J.~H. Yoo, S.~V. Isakov,
  P.~Massey, R.~Halavati, M.~Y. Niu, A.~Zlokapa, E.~Peters, O.~Lockwood,
  A.~Skolik, S.~Jerbi, V.~Dunjko, M.~Leib, M.~Streif, D.~V. Dollen, H.~Chen,
  S.~Cao, R.~Wiersema, H.-Y. Huang, J.~R. McClean, R.~Babbush, S.~Boixo,
  D.~Bacon, A.~K. Ho, H.~Neven, and M.~Mohseni, ``Tensorflow quantum: A
  software framework for quantum machine learning,'' \emph{arXiv:2003.02989},
  Mar. 2020.

\bibitem{massoli2021leap}
F.~V. Massoli, L.~Vadicamo, G.~Amato, and F.~Falchi, ``A leap among
  entanglement and neural networks: A quantum survey,''
  \emph{arXiv:2107.03313}, Jul. 2021.

\bibitem{Lu2018}
S.~Lu, S.~Huang, K.~Li, J.~Li, J.~Chen, D.~Lu, Z.~Ji, Y.~Shen, D.~Zhou, and
  B.~Zeng, ``Separability-entanglement classifier via machine learning,''
  \emph{Physical Review A}, vol.~98, no.~1, p. 012315, 2018.

\bibitem{Kassal2011}
I.~Kassal, J.~D. Whitfield, A.~Perdomo-Ortiz, M.-H. Yung, and A.~Aspuru-Guzik,
  ``Simulating chemistry using quantum computers,'' \emph{Annual Review of
  Physical Chemistry}, vol.~62, no.~1, pp. 185--207, may 2011.

\bibitem{McArdle2020}
S.~McArdle, S.~Endo, A.~Aspuru-Guzik, S.~C. Benjamin, and X.~Yuan, ``Quantum
  computational chemistry,'' \emph{Reviews of Modern Physics}, vol.~92, no.~1,
  p. 015003, mar 2020.

\bibitem{Cao2019}
Y.~Cao, J.~Romero, J.~P. Olson, M.~Degroote, P.~D. Johnson, M.~Kieferov{\'{a}},
  I.~D. Kivlichan, T.~Menke, B.~Peropadre, N.~P.~D. Sawaya, S.~Sim, L.~Veis,
  and A.~Aspuru-Guzik, ``Quantum chemistry in the age of quantum computing,''
  \emph{Chemical Reviews}, vol. 119, no.~19, pp. 10\,856--10\,915, aug 2019.

\bibitem{Hempel2018}
C.~Hempel, C.~Maier, J.~Romero, J.~McClean, T.~Monz, H.~Shen, P.~Jurcevic,
  B.~P. Lanyon, P.~Love, R.~Babbush, A.~Aspuru-Guzik, R.~Blatt, and C.~F. Roos,
  ``Quantum chemistry calculations on a trapped-ion quantum simulator,''
  \emph{Physical Review X}, vol.~8, no.~3, p. 031022, jul 2018.

\bibitem{Bauer2020}
B.~Bauer, S.~Bravyi, M.~Motta, and G.~K.-L. Chan, ``Quantum algorithms for
  quantum chemistry and quantum materials science,'' \emph{Chemical Reviews},
  vol. 120, no.~22, pp. 12\,685--12\,717, oct 2020.

\bibitem{Broecker2017}
P.~Broecker, J.~Carrasquilla, R.~G. Melko, and S.~Trebst, ``Machine learning
  quantum phases of matter beyond the fermion sign problem,'' vol.~7, no.~1,
  aug 2017.

\bibitem{Biamonte2017}
J.~Biamonte, P.~Wittek, N.~Pancotti, P.~Rebentrost, N.~Wiebe, and S.~Lloyd,
  ``Quantum machine learning,'' vol. 549, no. 7671, pp. 195--202, sep 2017.

\bibitem{Ma2018}
Y.-C. Ma and M.-H. Yung, ``Transforming bell's inequalities into state
  classifiers with machine learning,'' \emph{npj Quantum Information}, vol.~4,
  no.~1, jul 2018.

\bibitem{Hiesmayr2021}
B.~C. Hiesmayr, ``Free versus bound entanglement, a {NP}-hard problem tackled
  by machine learning,'' \emph{Scientific Reports}, vol.~11, no.~1, oct 2021.

\bibitem{Chen2021a}
C.~Chen, C.~Ren, H.~Lin, and H.~Lu, ``Entanglement structure detection via
  machine learning,'' \emph{Quantum Science and Technology}, 2021.

\bibitem{Deng2017}
D.-L. Deng, X.~Li, and S.~D. Sarma, ``Quantum entanglement in neural network
  states,'' \emph{Physical Review X}, vol.~7, no.~2, p. 021021, 2017.

\bibitem{Kivlichan2020}
I.~D. Kivlichan, C.~Gidney, D.~W. Berry, N.~Wiebe, J.~McClean, W.~Sun,
  Z.~Jiang, N.~Rubin, A.~Fowler, A.~Aspuru-Guzik, H.~Neven, and R.~Babbush,
  ``Improved fault-tolerant quantum simulation of condensed-phase correlated
  electrons via trotterization,'' \emph{Quantum}, vol.~4, p. 296, jul 2020.

\bibitem{Kearns1994}
M.~J. Kearns, R.~E. Schapire, and L.~M. Sellie, ``Toward efficient agnostic
  learning,'' \emph{Machine Learning}, vol.~17, no. 2-3, pp. 115--141, 1994.

\bibitem{Valiant1984}
L.~G. Valiant, ``A theory of the learnable,'' \emph{Communications of the
  {ACM}}, vol.~27, no.~11, pp. 1134--1142, nov 1984.

\bibitem{Aaronson2007}
S.~Aaronson, ``The learnability of quantum states,'' \emph{Proceedings of the
  Royal Society A: Mathematical, Physical and Engineering Sciences}, vol. 463,
  no. 2088, pp. 3089--3114, sep 2007.

\bibitem{Barnett2009}
S.~M. Barnett and S.~Croke, ``Quantum state discrimination,'' \emph{Advances in
  Optics and Photonics}, vol.~1, no.~2, p. 238, feb 2009.

\bibitem{Bshouty1998}
N.~H. Bshouty and J.~C. Jackson, ``Learning dnf over the uniform distribution
  using a quantum example oracle,'' \emph{SIAM Journal on Computing}, vol.~28,
  no.~3, pp. 1136--1153, 1998.

\bibitem{Badescu2019}
C.~Badescu, R.~O'Donnell, and J.~Wright, ``Quantum state certification,'' in
  \emph{Proceedings of the 51st Annual {ACM} {SIGACT} Symposium on Theory of
  Computing}.\hskip 1em plus 0.5em minus 0.4em\relax {ACM}, jun 2019.

\bibitem{Montanaro2016}
A.~Montanaro and R.~de~Wolf, ``A survey of quantum property testing,''
  \emph{Theory of Computing}, vol.~1, no.~1, pp. 1--81, 2016.

\bibitem{HeidariQuantum2021}
M.~Heidari, A.~Padakandla, and W.~Szpankowski, ``A theoretical framework for
  learning from quantum data,'' in \emph{{IEEE} International Symposium on
  Information Theory ({ISIT})}, 2021.

\bibitem{Heidari2023Quantum1}
M.~Heidari and W.~Szpankowski, ``Learning k-qubit quantum operators via pauli
  decomposition,'' in \emph{Proceedings of The 26th International Conference on
  Artificial Intelligence and Statistics}, ser. Proceedings of Machine Learning
  Research, F.~Ruiz, J.~Dy, and J.-W. van~de Meent, Eds., vol. 206.\hskip 1em
  plus 0.5em minus 0.4em\relax PMLR, 25--27 Apr 2023.

\bibitem{ODonnell2016}
R.~O'Donnell and J.~Wright, ``Efficient quantum tomography,'' in
  \emph{Proceedings of the forty-eighth annual {ACM} symposium on Theory of
  Computing}.\hskip 1em plus 0.5em minus 0.4em\relax {ACM}, jun 2016.

\bibitem{Haah2016}
J.~Haah, A.~W. Harrow, Z.~Ji, X.~Wu, and N.~Yu, ``Sample-optimal tomography of
  quantum states,'' in \emph{Proceedings of the forty-eighth annual {ACM}
  symposium on Theory of Computing}.\hskip 1em plus 0.5em minus 0.4em\relax
  {ACM}, jun 2016.

\bibitem{Bubeck2020}
S.~Bubeck, S.~Chen, and J.~Li, ``Entanglement is necessary for optimal quantum
  property testing,'' in \emph{2020 {IEEE} 61st Annual Symposium on Foundations
  of Computer Science ({FOCS})}.\hskip 1em plus 0.5em minus 0.4em\relax {IEEE},
  nov 2020.

\bibitem{Gambs2008}
S.~Gambs, ``Quantum classification,'' \emph{0809.0444 [quant-ph]}, Sep. 2008.

\bibitem{Guta2010}
M.~Guta and W.~Kotlowski, ``Quantum learning: asymptotically optimal
  classification of qubit states,'' \emph{New Journal of Physics}, vol.~12,
  no.~12, p. 123032, dec 2010.

\bibitem{Cheng2021}
H.-C. Cheng, A.~Winter, and N.~Yu, ``Discrimination of quantum states under
  locality constraints in the many-copy setting,'' in \emph{2021 {IEEE}
  International Symposium on Information Theory ({ISIT})}.\hskip 1em plus 0.5em
  minus 0.4em\relax {IEEE}, jul 2021.

\bibitem{Arunachalam2017}
S.~Arunachalam and R.~de~Wolf, ``A survey of quantum learning theory,''
  \emph{arXiv:1701.06806}, 2017.

\bibitem{Kanade2018}
V.~Kanade, A.~Rocchetto, and S.~Severini, ``Learning dnfs under product
  distributions via $\mu$-biased quantum fourier sampling,''
  \emph{arXiv:1802.05690v3}, 2019.

\bibitem{Bernstein1997}
E.~Bernstein and U.~Vazirani, ``Quantum complexity theory,'' \emph{{SIAM}
  Journal on Computing}, vol.~26, no.~5, pp. 1411--1473, oct 1997.

\bibitem{Servedio2004}
R.~A. Servedio and S.~J. Gortler, ``Equivalences and separations between
  quantum and classical learnability,'' \emph{SIAM J. Comput.}, vol.~33, no.~5,
  p. 1067–1092, May 2004.

\bibitem{Padakandla2022}
A.~Padakandla and A.~Magner, ``Pac learning of quantum measurement classes :
  Sample complexity bounds and universal consistency,'' in \emph{Proceedings of
  The 25th International Conference on Artificial Intelligence and Statistics},
  ser. Proceedings of Machine Learning Research, G.~Camps-Valls, F.~J.~R. Ruiz,
  and I.~Valera, Eds., vol. 151.\hskip 1em plus 0.5em minus 0.4em\relax PMLR,
  28--30 Mar 2022, pp. 11\,305--11\,319.

\bibitem{HeidariQuantum2019}
M.~Heidari, T.~A. Atif, and S.~S. Pradhan, ``Faithful simulation of distributed
  quantum measurements with applications in distributed rate-distortion
  theory,'' in \emph{2019 {IEEE} International Symposium on Information Theory
  ({ISIT})}.\hskip 1em plus 0.5em minus 0.4em\relax {IEEE}, jul 2019.

\bibitem{Huang2020}
H.-Y. Huang, R.~Kueng, and J.~Preskill, ``Predicting many properties of a
  quantum system from very few measurements,'' \emph{Nature Physics 16,
  1050--1057 (2020)}, Feb. 2020.

\bibitem{Holevo2012}
A.~S. Holevo, \emph{Quantum Systems, Channels, Information}.\hskip 1em plus
  0.5em minus 0.4em\relax {DE} {GRUYTER}, jan 2012.

\bibitem{Helsen2023}
J.~Helsen and M.~Walter, ``Thrifty shadow estimation: Reusing quantum circuits
  and bounding tails,'' \emph{Physical Review Letters}, vol. 131, no.~24, p.
  240602, Dec. 2023.

\bibitem{Zhao2021}
A.~Zhao, N.~C. Rubin, and A.~Miyake, ``Fermionic partial tomography via
  classical shadows,'' \emph{Physical Review Letters}, vol. 127, no.~11, p.
  110504, Sep. 2021.

\bibitem{Dubhashi2009}
D.~P. Dubhashi and A.~Panconesi, \emph{Concentration of Measure for the
  Analysis of Randomized Algorithms}.\hskip 1em plus 0.5em minus 0.4em\relax
  Cambridge University Press, Jun. 2009.

\end{thebibliography}
\end{document}